\def\BState{\State\hskip-\ALG@thistlm}
\newcolumntype{L}[1]{>{\raggedright\let\newline\\\arraybackslash\hspace{0pt}}m{#1}}
\newcolumntype{C}[1]{>{\centering\let\newline\\\arraybackslash\hspace{0pt}}m{#1}}
\newcolumntype{R}[1]{>{\raggedleft\let\newline\\\arraybackslash\hspace{0pt}}m{#1}}
\newtheorem{theorem}{Theorem}
\newtheorem{proposition}[theorem]{Proposition}
\newtheorem{corollary}[theorem]{Corollary}
\newtheorem{lemma}[theorem]{Lemma}
\theoremstyle{definition}
\newtheorem{definition}[theorem]{Definition}
\theoremstyle{remark}
\newtheorem{remark}[theorem]{Remark}
\newtheorem{example}[theorem]{Example}
\newcommand{\addresseshere}{%
  \enddoc@text\let\enddoc@text\relax
}
\newfont{\pogrubianemat}{msbm10}
\newfont{\malepogrubianemat}{msbm7}
\def\RR{\mathbb R}
\def\IND{\mathbbm{1}}
\DeclareMathAlphabet{\mathbfcal}{OMS}{cmsy}{b}{n}
\newcommand\xleftrightarrow[2][]{%
  \ext@arrow 9999{\longleftrightarrowfill@}{#1}{#2}}
\newcommand\longleftrightarrowfill@{%
  \arrowfill@\leftarrow\relbar\rightarrow}
\DeclareTextFontCommand{\textbfit}{%
  \fontseries\bfdefault 
  \itshape
}
\newlength{\leftstackrelawd}
\newlength{\leftstackrelbwd}
\def\leftstackrel#1#2{\settowidth{\leftstackrelawd}%
{${{}^{#1}}$}\settowidth{\leftstackrelbwd}{$#2$}%
\addtolength{\leftstackrelawd}{-\leftstackrelbwd}%
\leavevmode\ifthenelse{\lengthtest{\leftstackrelawd>0pt}}%
{\kern-.5\leftstackrelawd}{}\mathrel{\mathop{#2}\limits^{#1}}}
\newcommand{\des}{\mathbf{\sf{D}}}
\newcommand{\adv}{\mathbf{\sf{A}}}
\newcommand{\Varphi}{\mathit{\Phi}}
\newcommand{\Vardelta}{\mathit{\Delta}}
\newcommand{\Vargamma}{\mathit{\Gamma}}
\newcommand{\Ex}{\mathbf{E}}
\newcommand{\poa}{\mathrm{PoA}}
\newcommand{\graph}{G}
\newcommand{\trgraph}{G'}
\newcommand{\defense}{\Delta}
\newcommand{\defcomp}{C_{1}}
\newcommand{\trdefcomp}{C'_{1}}
\newcommand{\comp}{C}
\newcommand{\abs}[1]{|{#1}|}
\newcommand{\Payoff}{P}
\newcommand{\PoA}{\mathrm{PoA}}
\newcommand{\overbar}[1]{\mkern 1.5mu\overline{\mkern-1.5mu#1\mkern-1.5mu}\mkern 1.5mu}
\newcommand{\byznode}{\theta_{\adv}}
\renewcommand{\subset}{\subseteq}
\newcommand{\globdespayoff} { \hat{U}^{\des}_{\star}  } 
\newcommand{\equilibrium}{\bm e}
\newcommand{\Equilibria}{\mathcal{E}}
\newcommand{\graphs}{\mathcal{G}}
\newcommand{\nodes}{V}
\newcommand{\simplex}{\Sigma}
\newcommand{\card}{\abs}
\newcommand{\despayoff}{w}
\newcommand{\byznodes}{B}
\newcommand{\nbyz}{n_B}
\newcommand{\ninf}{n_A}
\title{Individual Security and Network Design with Malicious Nodes}
\date{\today}
\author{Tomasz Janus}
\author{Mateusz Skomra}
\author{Marcin Dziubi{\'n}ski}
\address{\textit{T.~Janus, M.~Dziubi{\'n}ski}: Institute of Informatics, Faculty of Mathematics, Informatics and Mechanics, University of Warsaw, Banacha 2, 02-097 Warszawa, Poland}
\email{t.janus@mimuw.edu.pl, m.dziubinski@mimuw.edu.pl}
\address{\textit{M.~Skomra}: CMAP, Ecole Polytechnique, CNRS and INRIA, 91128 Palaiseau Cedex, France}
\email{mateusz.skomra@polytechnique.edu}
\keywords{Network design, individual security, byzantine players, inefficiencies, networks}
\thanks{This work was supported by Polish National Science Centre through grant no 2014/13/B/ST6/01807. M.~Skomra is  supported by a grant from R{\'e}gion Ile-de-France.}
\begin{document}
\begin{abstract}
Networks are beneficial to those being connected but can also be used as carriers of contagious hostile attacks. These attacks are often facilitated by exploiting corrupt network users. To protect against the attacks, users can resort to costly defense. The decentralized nature of such protection is known to be inefficient but the inefficiencies can be mitigated by a careful network design. Is network design still effective when not all users can be trusted? We propose a model of network design and defense with byzantine nodes to address this question. We study the optimal defended networks in the case of centralized defense and, for the case of decentralized defense, we show that the inefficiencies due to decentralization can be fully mitigated, despite the presence of the byzantine nodes.
\end{abstract}

\maketitle

\section{Introduction}
\label{sec:intro}
Game theoretic models of interdependent security have been used to study security of complex information and physical systems for more than a decade~\cite{LFB14}. One of the key findings is that the externalities resulting from security decisions made by selfish agents lead to, potentially significant, inefficiencies. This motivates research on methods for improving information security, such as insurance~\cite{BS10} and network design~\cite{CDG14,CDG17}. We study the problem of network design for interdependent security in the setup where a strategic adversary collaborates with some nodes in order to disrupt the network.

\subsection*{The motivation}
Our main motivation is computer network security in face of contagious attack by a strategic adversary. Examples of contagious attacks are stealth worms and viruses, that gradually spread over the network, infecting subsequent unprotected nodes. Such attacks are considered among the main threats to cyber security~\cite{SPW02}. Moreover, the study of the data from actual attacks demonstrates that the attackers spend time and resources to study the networks and choose the best place to attack~\cite{SPW02}. Direct and indirect infection can be prevented by taking security measures that are costly and effective (i.e., provide sufficiently high safety to be considered perfect). Examples include using the right equipment (such as dedicated high quality routers), software (antivirus software, firewall), and following safety practices. All of these measures are costly. In particular, having antivirus software is cheap but using it can be considered to be costly, safety practises may require staff training, staying up to date with possible threats, creating backups, updating software, hiring specialized, well-paid staff. The security decisions are made individually by selfish nodes. Each node derives benefits from the nodes it is connected to (directly or indirectly) in the network. An example is the Metcalfe's law (attributed to Robert Metcalfe~\cite{SV00}, a co-inventor of Ethernet) stating that each node's benefits from the network are equal to the number of nodes it can reach in the network, and the value of a connected network is equal to the square of the number of its nodes. An additional threat faced by the nodes in the network is the existence of malicious nodes whose objectives are aligned with those of the adversary: they aim to disrupt the network~\cite{MSW06,MSW09}.

\subsection*{Contribution}
We study the effectiveness of network design for improving system security with malicious (or byzantine) players and strategic adversary.
To this end we propose and study a three stage game played by three classes of players: the designer, the adversary, and the nodes. Some of the nodes are malicious and cooperate with the adversary. The identity of the nodes is their private information, known to them and to the adversary only. The designer moves first, choosing the network of links between nodes. Then, costly protection is assigned to the nodes. We consider two methods of protection assignments: the centralized one, where the designer chooses the nodes to receive protection, and the decentralized one, where each node decides individually and independently whether to protect or not. Lastly, the adversary observes the protected network and chooses a single node to infect. The protection is perfect and each non-byzantine node can be infected only if she is unprotected. The byzantine nodes only pretend to use the protection and can be infected regardless of whether they are protected or not. After the initial node is infected, the infection spreads to all the nodes reachable from the origin of infection via a path containing unprotected or byzantine nodes.
We show that if the protection decisions are centralized, so that the designer chooses both the network and the protection assignment, then either choosing a disconnected network with unprotected components of equal size or a generalized star with protected core is optimal. When protection decisions are decentralized, then, for sufficiently large number of nodes, the designer can resort to choosing the generalized star as well. In the case of sufficiently well-behaved returns from the network (including for example Metcalfe's law), the protection chosen by the nodes in equilibrium guarantees outcomes that are asymptotically close to the optimum. Hence, in such cases, the inefficiencies due to defense decentralization can be fully mitigated even in the presence of byzantine nodes.

\subsection*{Related work}
There are two, overlapping, strands of literature that our work is related to: the interdependent security games~\cite{LFB14} and multidefender security games~\cite{SVL14,LV15,LSV17}.
Early research on interdependent security games assumed that the players only care about their own survival and that there are no benefits from being connected \cite{KH03,V04,ACY06,LB08a,LB08b,CCO12,AMO16}. In particular, the authors of \cite{ACY06} study a setting in which the network is fixed beforehand, nodes only care about their own survival, attack is random, protection is perfect, and contagion is perfect: infection spreads between unprotected nodes with probability $1$. The focus is on computing Nash equilibria of the game and estimating the inefficiencies caused by defense decentralization. They show that finding one Nash equilibrium is doable in polynomial time, but finding the least or most expensive one is NP-hard. They also point out the high inefficiency of decentralized protection, by showing unboundedness of the price of anarchy. In~\cite{LB08a,LB08b} techniques based on local mean field analysis are used to study the problem of incentives and externalities in network security
on random networks. In a more recent publication~\cite{AMO16}, individual investments in protection are considered. The focus is on the strategic structure of the security decisions across individuals and how the network shapes the choices under random versus targeted attacks. The authors show that both under- and overinvestment may be present when protection decisions are decentralized.
A slightly different, but related, models are considered in~\cite{GWA10,GWA11,GMW12,LSB12a,LSB12b}. In these models
the defender chooses a spanning tree of a network, while the attacker chooses a link to remove. The defender and the adversary move simultaneously. The attack is successful if the chosen link belongs to the chosen spanning tree. Polynomial time algorithms for computing optimal attack and defense strategies are provided for several variants of this game. For a comprehensive review of interdependent security games see an excellent survey~\cite{LFB14}.

Multidefender security games are models of security where two or more defenders make security decisions with regard to nodes, connected in a network, and prior to an attack by a strategic adversary. Each of the defenders is responsible for his own subset of nodes and the responsibilities of different defenders are non-overlapping. The underlying network creates interdependencies between the defenders' objectives, which result in externalities, like in the interdependent security games. The distinctive feature of multidefender security models is the adopted solution concept: the average case Stackelberg equilibrium.  The model is two stage. In the first stage the defenders commit to mixed strategies assigning different types of security configurations across the nodes. In the second stage the adversary observes the network and chooses an attack. The research focuses on equilibrium computation and quantification of inefficiencies due to distributed protection decisions.

Papers most related to our work are~\cite{MSW09,CDG14,CDG17,GJKKM16}. The authors of~\cite{MSW06} introduce malicious nodes to the model of~\cite{ACY06}. The key finding in that paper is that the presence of malicious nodes creates a ``fear factor'' that reduces the problem of underprotection due to defense decentralization. Inspired by~\cite{MSW06,MSW09}, we also consider malicious nodes in the context of network defense. We provide a formal model of the game with such nodes as a game with incomplete information. Our contribution, in comparison to~\cite{MSW09}, lies in placing the players in a richer setup, where nodes care about their connectivity as well as their survival, and where both underprotection (i.e., insufficiently many nodes protect as compared to an optimum) and overprotection (excessively many nodes protect as compared to an optimum) problems are present. This leads to a much more complicated incentives structure. In particular, the presence of malicious nodes may lead to underprotection, as nodes may be unable to secure sufficient returns from choosing protection on their own.

Works~\cite{CDG14,CDG17} consider the problem of network design and defense prior to the attack by a strategic adversary. In a setting where the nodes care about both their connectivity and their survival, the authors study the inefficiencies caused by defense decentralization and how they can be mitigated by network design. The authors show that both underprotection as well as overprotection may appear, depending on the costs of protection and network topology. Both inefficiencies can be mitigated by network design. In particular, the underprotection problem can be fully mitigated by designing a network that creates a cascade of incentives to protect. Our work builds on~\cite{CDG14,CDG17} by introducing malicious nodes to the model. We show how the designer can address the problem of uncertainty about the types of nodes and, at the same time, mitigate the inefficiencies due to defense decentralization. 
Lastly, in~\cite{GJKKM16}, a model of decentralized network formation and defense prior to the attack by adversaries of different profiles is considered. The authors show, in particular, that despite the decentralized protocol of network formation, the inefficiencies caused by defense decentralization are relatively low.

The rest of the paper is structured as follows. In \cref{sec:model} we define the model of the game, which we then analyze in \cref{sec:analysis}. In \cref{sec:extension} we discuss possible modifications of our model. We provide concluding remarks in \cref{sec:concl}. \Cref{ap:centralized} contains the proofs of the most technical results.

\section{The model}
\label{sec:model}
There are $(n+2)$ players: the designer ($\des$), the nodes ($V$), and the adversary~($\adv$). In addition, each of the nodes is of one of two types: a genuine node (type~$1$) or a byzantine node (type~$0$). We assume that there are at least $n = 3$ nodes and that there is a fixed amount $\nbyz \ge 1$ of byzantine nodes. The byzantine nodes cooperate with the adversary and their identity is known to $\adv$. All the nodes know their own type only. On the other hand, the adversary has complete information about the game. We suppose that he infects a subset of $\ninf \ge 1$ nodes. A \emph{network} over a set of nodes $V$ is a pair $G = ( V,E )$, where $E \subseteq \{ij \colon i,j \in V\}$ is the set of undirected links of $G$. Given a set of nodes $V$, $\mathcal{G}(V)$ denotes the set of all networks over $V$ and $\mathcal{G} = \bigcup_{U
\subseteq V} \mathcal{G}(U)$ is the set of all networks that can be formed over $V$ or any of its subsets. The game proceeds in four rounds (the numbers $n \ge 3, \nbyz \ge 1, \ninf \ge 1$ are fixed before the game):
\begin{enumerate}
\item The types of the nodes are realized.
\item $\des$ chooses a network $G \in \mathcal{G}(V)$, where $\mathcal{G}(V)$ is the set of all undirected networks over~$V$.
\item Nodes from $V$ observe $G$ and choose, simultaneously and independently, whether to protect (what we denote by $1$) or not (denoted by $0$). This determines the set of protected nodes
      $\Vardelta$. The protection of the byzantine nodes is fake and, when attacked, such node gets infected and transmits the infection to all her neighbors.
\item $\adv$ observes the protected network $(G,\Vardelta)$ and chooses a subset $I \subset V$ consisting of $\card{I} = \ninf \ge 1$ nodes to infect. The infection spreads and eliminates all unprotected or byzantine nodes reachable from $I$ in $G$ via a path that does not contain a genuine protected node from $\Vardelta$.
This leads to the residual network obtained from $G$ by removing all the infected nodes.
\end{enumerate}

Payoffs to the players are based on the residual network and costs of defense.
The returns from a network are measured by a \emph{network value function}
$\Varphi \colon \bigcup_{U \subseteq V} \mathcal{G}(U) \rightarrow \mathbb{R}$
that assigns a numerical value to each network that can be formed over a subset $U$ of nodes from $V$.

A \emph{path in $G$ between nodes} $i,j \in V$ is a sequence of nodes $i_0,\ldots,i_m \in V$
such that $i = i_0$, $j = i_m$, $m \geq 1$, and $i_{k-1}i_k \in E$ for all $k = 1,\ldots,m$. Node $j$ is \emph{reachable} from node $i$ in $G$ if $i = j$ or there is a path between them in $G$. A \emph{component} of a network $G$ is a maximal set of nodes $C\subseteq V$ such that for all $i,j \in C$, $i \neq j$, $i$ and $j$ are reachable in $G$. The set of components of $G$ is denoted by $\mathcal{C}(G)$.
Given a network $G$ and a node $i \in V$, $C_i(G)$ denotes the component $C \in \mathcal{C}(G)$ such that $i \in C$. Network $G$ is \emph{connected} if $|\mathcal{C}(G)| = 1$.

We consider the following family of network value functions:
\begin{displaymath}
\Varphi(G) = \sum_{C\in \mathcal{C}(G)} f(|C|) \, ,
\end{displaymath}
where the function $f \colon \mathbb{R}_{\geq 0} \rightarrow \mathbb{R}$ is increasing, strictly convex, satisfies $f(0) = 0$, and, for all $x \ge 1$, verifies the inequalities 
\begin{equation}\label{eq:move_half}
\begin{aligned}
f(3x) &\geq 2f(2x) \, , \\ 
f(3x + 2) &\ge f(2x+2) + f(2x+1) \, .
\end{aligned}
\end{equation} 
In other words, the value of a connected network is an increasing and strictly convex function of its size. The value of a disconnected network is equal to the sum of values of its components. These assumptions reflect the idea that each node derives additional utility from every node she can reach in the network. In the last property we assume that these returns are sufficiently large: the returns from increasing the size of a component by $50\%$ are higher than the returns from adding an additional, separate, component of the same size to the network.
Such form of network value function is in line with Metcalfe's law, where the value of a connected network over $x$ nodes is given by $f(x) = x^2$, as well as with Reed's law, where the value of a connected network is of exponential order with respect to the number of nodes (e.g., $f(x) = 2^x-1$).

Before defining payoff to a node from a given network, defense, and attack, we formally define the residual network. Given a network $G = (V,E)$ and a set of nodes $Z \subseteq V$, let $G-Z$ denote the network obtained from $G$ by removing the nodes from $Z$ and their connections from $G$. Thus $G-Z = (V\setminus Z, E[V\setminus Z])$, where $E[V\setminus Z] = \{ij \in E \colon i,j \in V\setminus Z\}$.
Given defense $\Vardelta$ and the set of byzantine nodes $\byznodes$, the graph $A(G\mid \Vardelta,\byznodes) = G - \Vardelta\setminus \byznodes$ is called the \emph{attack graph}. By infecting a node $i \in V$, the adversary eliminates the component of $i$ in the attack graph, $C_i(A(G\mid \Vardelta,\byznodes))$.\footnote{We define $C_i(A(G\mid \Vardelta,\byznodes)) = \varnothing$ for every $i \in \Vardelta\setminus \byznodes$.} Hence, if the adversary infects a subset $I \subset \nodes$ of nodes, then the \emph{residual network} (i.e., the network that remains) after such an attack is $R(G \mid \Vardelta, \byznodes, I) = G - \bigcup_{i \in I}C_i(A(G \mid \Vardelta,\byznodes))$. 

Nodes' information about whether they are genuine or byzantine is private. Similarly, the adversary's information about the identity of the byzantine nodes is private. As usual in games with incomplete information, private information of the players is represented by their \emph{types}. The type of a node $i \in V$ is represented by $\theta_i \in \{0,1\}$ ($\theta_i = 1$ means that $i$ is genuine and $\theta_i = 0$ means that $i$ is byzantine) and the type of the adversary is represented by $\theta_{\adv} \in \binom{\nodes}{\nbyz}$. (If $X$ is a finite set, then we denote by $\binom{X}{t}$ the set of subsets of $X$ of cardinality $t$.) A vector $\bm{\theta} = (\theta_1,\ldots,\theta_n,\theta_{\adv})$ of players' types is called a \emph{type profile}. The type profiles must be consistent so that the byzantine nodes are really known to the adversary. The set of consistent type profiles is $\Theta = \{(\theta_1,\ldots,\theta_n,\theta_{\adv}) \colon \theta_{\adv} = \{i \in V \colon \theta_{i} =  0\} , \card{\theta_{\adv}} = \nbyz \}$.
\begin{remark}
We point out that $\byznodes \subset V$ is the set of byzantine nodes (i.e., the true state of the world) while $\theta_{\adv}$ denotes the beliefs of the adversary. The consistency assumption implies that the beliefs of the adversary are correct and $\theta_{\adv} = \byznodes$.
\end{remark}

The adversary aims to minimize the gross welfare (i.e., the sum of nodes' gross payoffs), which is equal to the value of the residual network. Given a network $G$, the set of protected nodes $\Vardelta$, and the type profile $\bm{\theta} \in \Theta$, the payoff to the adversary from infecting the set of nodes~$I$ is
\begin{align*}
u^{\adv}(G,\Vardelta,I \mid \bm{\theta}) = -\Varphi(R(G \mid \Vardelta, \byznodes, I)) = -\sum_{C \in {\mathcal C}(R(G \mid \Vardelta, \byznodes, I))} f(|C|) \, .
\end{align*}

The designer aims to maximize the value of the residual network minus the cost of defense. Notice that this cost includes the cost of defense of the byzantine nodes. Formally, the designer's payoff from network $G$ under defense $\Vardelta$, the set of infected nodes $I$, and the type profile $\bm{\theta}$ is equal to
\begin{align*}
u^{\des}(G,\Vardelta,I \mid \bm{\theta}) =  \Varphi(R(G\mid \Vardelta,I,\byznodes))  - |\Vardelta| c
 = \left(\sum_{C \in \mathcal{C}(R(G \mid \Vardelta,I,\byznodes))} f(|C|)\right) - |\Vardelta| c \, .
\end{align*}

The \emph{gross payoff} to a genuine (i.e., not a byzantine) node $j \in V$ in a network $G$ is equal to $f(|C_j(G)|)/|C_j(G)|$. In other words, each genuine node gets the equal share of the value of her component. The net payoff of a node is equal to the gross payoff minus the cost of protection. A genuine node gets payoff $0$ when removed. Defense has cost $c\in \mathbb{R}_{>0}$.
The byzantine nodes have the same objectives as the adversary and their payoff is the same as that of $\adv$.
Formally, a payoff to the node $j \in V$ given a network $G$ with defended nodes $\Vardelta$, the set of infected nodes~$I$, and the type profile $\bm{\theta}\in \Theta$ is equal to
 \begin{align*}
u^j&(G,\Vardelta,I \mid \bm{\theta})=\begin{cases}
				  u^{\adv}(G,\Vardelta,I \mid \bm{\theta}), & \textrm{if $\theta_j = 0$}, \\
                  \frac{f(|C_j(R(G\mid \Vardelta,\byznodes,I))|)}{|C_j(R(G\mid \Vardelta,\byznodes,I))|}, & \textrm{if $\theta_j = 1$, $j \notin \Vardelta$}, \\
                  & \textrm{and $j \notin \bigcup_{i \in I}C_i(A(G\mid \Vardelta,\byznodes))$},\\
                  \frac{f(|C_j(R(G \mid \Vardelta,\byznodes,I))|)}{|C_j(R(G\mid \Vardelta,\byznodes,I))|} - c, & \textrm{if $\theta_j = 1$ and $j \in \Vardelta$}, \\
                  0, & \textrm{if $\theta_j = 1$, $j \notin \Vardelta$,} \\
                     & \textrm{and $j \in \bigcup_{i \in I} C_i(A(G\mid \Vardelta, \byznodes))$} \, .
                  \end{cases}
\end{align*}

The adversary and the byzantine nodes make choices that maximize their utility. The designer and the nodes have incomplete information about the game and we assume that they are pessimistic, making choices that maximize the worst possible type realization (cf.~\cite{AB06}). Formally, the \emph{pessimistic utility} of a genuine (i.e., of type $\theta_j = 1$) node $j$ from network $G$, the set of protected nodes $\Delta$, and the set of infected nodes $I$, is
\begin{equation*}
\hat{U}^j(G,\Vardelta,I) = \inf_{(\bm{\theta}_{-j},1)\in \Theta} u^j(G,\Vardelta,I \mid (\bm{\theta}_{-j},1)) \, .
\end{equation*}
Similarly, the pessimistic utility of the designer from network $G$, the set of protected nodes $\Delta$, and the set of infected nodes $I$, is
\begin{equation*}
\hat{U}^{\des}(G,\Vardelta,I) = \inf_{\bm{\theta}\in \Theta} u^{\des}(G,\Vardelta,I \mid \bm{\theta}) \, .
\end{equation*}

To summarize, the set of players is $P = V \cup \{\des,\adv\}$. The set of strategies of player $\des$ is $S^{\des} = \mathcal{G}(V)$.
A strategy of each node $j$ is a function $\delta_j \colon \mathcal{G}(V) \times \{0,1\} \rightarrow \{0,1\}$ that,
given a network $G \in \mathcal{G}(V)$ and a node's type $\theta_j\in \{0,1\}$, provides the defense decision $\delta_j(G, \theta_j)$ of the node. The individual strategies of the nodes determine a function $\Vardelta \colon \mathcal{G}(V)\times \{0,1\}^{V} \rightarrow 2^V$ providing, given a network $G \in \mathcal{G}(V)$ and nodes' types profile $\bm{\theta}_{-\adv}\in \{0,1\}^V$, the set of defended nodes $\Vardelta(G \mid \bm{\theta}_{-\adv}) = \{j\in V \colon \delta_j(G,\theta_j) = 1\}$.
The set of strategies of each node $j \in V$ is $S^j = 2^{\mathcal{G}(V)\times \{0,1\}}$. 
A strategy of player $\adv$ is a function $x \colon \mathcal{G}(V) \times 2^V \times \binom{\nodes}{\nbyz} \rightarrow \binom{\nodes}{\ninf}$ that, given a network $G \in \mathcal{G}(V)$, the set of protected nodes $\Vardelta \subseteq V$, and adversary's type $\theta_{\adv}\in \binom{\nodes}{\nbyz}$, provides the set of nodes to infect $x(G,\Vardelta,\theta_{\adv})$. The set of strategies of player $\adv$ is $S^{\adv} = \binom{\nodes}{\ninf}^{\mathcal{G}(V) \times 2^V \times \binom{\nodes}{\nbyz}}$.

Abusing the notation slightly, we use the same notation for utilities of the players from the strategy profiles in the game. Thus, given a strategy profile $(G,\Delta,x)$ and a type profile $\bm{\theta}$, the payoff to player $j \in V \cup \{\des,\adv\}$ is $u^{j}(G,\Vardelta,x \mid \bm{\theta}) = u^{j}(G,\Vardelta(G),x(G,\Vardelta(G)) \mid \bm{\theta})$, the pessimistic payoff to player $j \in V \setminus \byznodes$ is
\begin{equation}\label{eq:pess_payoff_node}
\hat{U}^j(G,\Vardelta,x) = \inf_{(\bm{\theta}_{-j},1)\in \Theta} u^j(G,\Vardelta(G),x(G, \Vardelta(G)) \mid (\bm{\theta}_{-j},1)) \, ,
\end{equation}
and the pessimistic payoff to the designer is given by 
\begin{equation}\label{eq:pess_payoff_des}
\hat{U}^{\des}(G,\Vardelta,x) = \inf_{\bm{\theta}\in \Theta} u^{\des}(G,\Vardelta(G),x(G, \Vardelta(G)) \mid \bm{\theta}) \, .
\end{equation}

By convention, we say that the pessimistic payoff of the byzantine node is the same as her payoff. We are interested in subgame perfect mixed strategy equilibria of the game with the preferences of the players defined by the pessimistic payoffs. We call them the equilibria, for short. We make the usual assumption that when evaluating a mixed strategy profile, the players 
consider an expected value of their payoffs from the pure strategies. In the case of the designer and the genuine nodes, these are expected pessimistic payoffs.

Throughout the paper we will also refer to the subgames ensuing after a network $G$ is chosen. We will denote such subgames by $\Vargamma(G)$ and call the \emph{network subgames}. We will abuse the notation by using the same letters to denote the strategies in $\Vargamma(G)$ and in $\Vargamma$. The set of strategies of each node $i\in V$ in game $\Vargamma(G)$ is $\{0,1\}^{\{0,1\}}$. 
Given the type profile $\bm{\theta}_{-\adv}\in \{0,1\}^V$, the individual strategies of the nodes determine a function $\Vardelta \colon \{0,1\}^{V} \rightarrow 2^V$ that provides the set of defended nodes $\Vardelta(\bm{\theta}_{-\adv}) = \{j\in V \colon \delta_j(\theta_j) = 1\}$. The set of strategies of the adversary in $\Vargamma(G)$ is $\binom{\nodes}{\ninf}^{2^V\times \binom{\nodes}{\nbyz}}$.

All the key notations are summarized in \cref{tab:notation}.

\begin{table}
\small
\begin{tabular}{|c|c|}
\hline
$n$&\parbox[][20pt][c]{4cm}{\centering number of nodes} \\
\hline 
$\nbyz$&\parbox[][30pt][c]{4cm}{\centering number of byzantine nodes} \\
\hline
$\ninf$&\parbox[][35pt][c]{4cm}{\centering number of nodes infected by the adversary} \\
\hline
$f$&\parbox[][20pt][c]{4cm}{\centering component value function} \\
\hline
$\graph$&\parbox[][20pt][c]{4cm}{\centering network} \\
\hline
$\defense$&\parbox[][20pt][c]{4cm}{\centering set of protected nodes} \\
\hline
$u^{\des}, u^{\adv}, u^{j}$&\parbox[][30pt][c]{4cm}{\centering payoff to the designer, the adversary, and a node} \\
\hline
$\hat{U}^{\des}, \hat{U}^{j}$&\parbox[][30pt][c]{4cm}{\centering pessimistic payoff to the designer and a node} \\
\hline
\end{tabular}
\vspace*{0.5cm}
\caption{Summary of the notation.}\label{tab:notation}
\end{table}

\subsection{Remarks on the model}
We make a number of assumptions that, although common for interdependent security games, are worth commenting on. Firstly, we assume that protection is perfect. This assumption is reasonable when available means of protection are considered sufficiently reliable and, in particular, deter the adversary towards the unprotected nodes. Arguably, this is the case for the protection means used in cybersecurity. Secondly, we assume that the designer and genuine nodes are pessimistic and maximize their worst-case payoff. Such an approach is common in computer science and is in line with trying to provide the worst-case guarantees on system performance. One can also take the probabilistic approach (by supposing that the distribution of the byzantine nodes is given by a random variable). In \cref{sec:extension} we discuss how our results carry over to such model.

\section{The analysis}
\label{sec:analysis} 

We start the analysis by characterizing the centralized defense model, where the designer chooses both the network and the defense assignment to the nodes. After that the adversary observes the protected network and nodes' types and chooses the nodes to infect. We focus on the first nontrivial case $\nbyz = \ninf = 1$. In this case, we are able to characterize networks that are optimal to the designer. The topology of these networks is based on the generalized $k$-stars. We then turn to the decentralized defense and study the cost of decentralization. It turns out that the topology of $k$-star gives asymptotically low cost of decentralization not only for the simple case studied earlier but for all possible values of parameters $\nbyz$ and $\ninf$. This is enough to prove our main result, \cref{th:poa}, providing bounds on the price of anarchy.

\subsection{Centralized defense}\label{sec:centralized}
Fix the parameters $\nbyz, \ninf$ and suppose that the designer chooses both the network and the protection assignment. This leads to a two stage game where, in the first round, the designer chooses a protected network $(G,\Delta)$ and in the second round the adversary observes the protected network and nodes' types (recognizing the byzantine nodes) and chooses the nodes to attack. Payoffs to the designer and to the adversary are as described in \cref{sec:model} and we are interested in subgame perfect mixed strategy equilibria of the game with pessimistic preferences of the designer.
We call them equilibria, for short. Notice that, since the decisions are made sequentially, there is always a pure strategy equilibrium of this game. In this section, we focus only on such equilibria. Furthermore, the equilibrium payoff to the designer is the same for all equilibria. We denote this payoff by $\globdespayoff(n,c)$.

In the rest of this subsection we focus on the case $\nbyz = \ninf = 1$. In this case, when the protection is chosen by the designer, two types of protected networks can be chosen in an equilibrium (depending on the value function and the cost of defense): a disconnected network with no defense or a generalized star with protected core and, possibly, one or two unprotected components. Before stating the result characterizing equilibrium defended networks and equilibrium payoffs to the designer, we need to define the key concept of a generalized star and some auxiliary quantities. We start with the definition of a generalized star. If $G = (V, E)$ is a network and $V' \subset V$ is a subset of nodes, then we denote by $G[V']$ the subnetwork of $G$ induced by $V'$, i.e., the network $G[V'] = (V', \{ij \in E \colon i,j \in V' \})$.

\begin{definition}[Generalized $k$-star]
Given a set of nodes $V$ and $k \ge 1$, a \emph{generalized $k$-star} over $V$ is a network $G = (V,E)$ such that the set of nodes $V$ can be partitioned into two sets, $C$ (the core) of size $|C| = k$ and $P$ (the periphery), in such a way that $G[C]$ is a clique, every node in $P$ is connected to exactly one node in $C$, and every node in $C$ is connected to $\lfloor n/k \rfloor - 1$ or $\lceil n/k \rceil - 1$ nodes in $P$.
\end{definition}

\begin{figure}[t]
\begin{center}
\begin{minipage}{0.45\textwidth}
      \centering
        \includegraphics[scale=0.5]{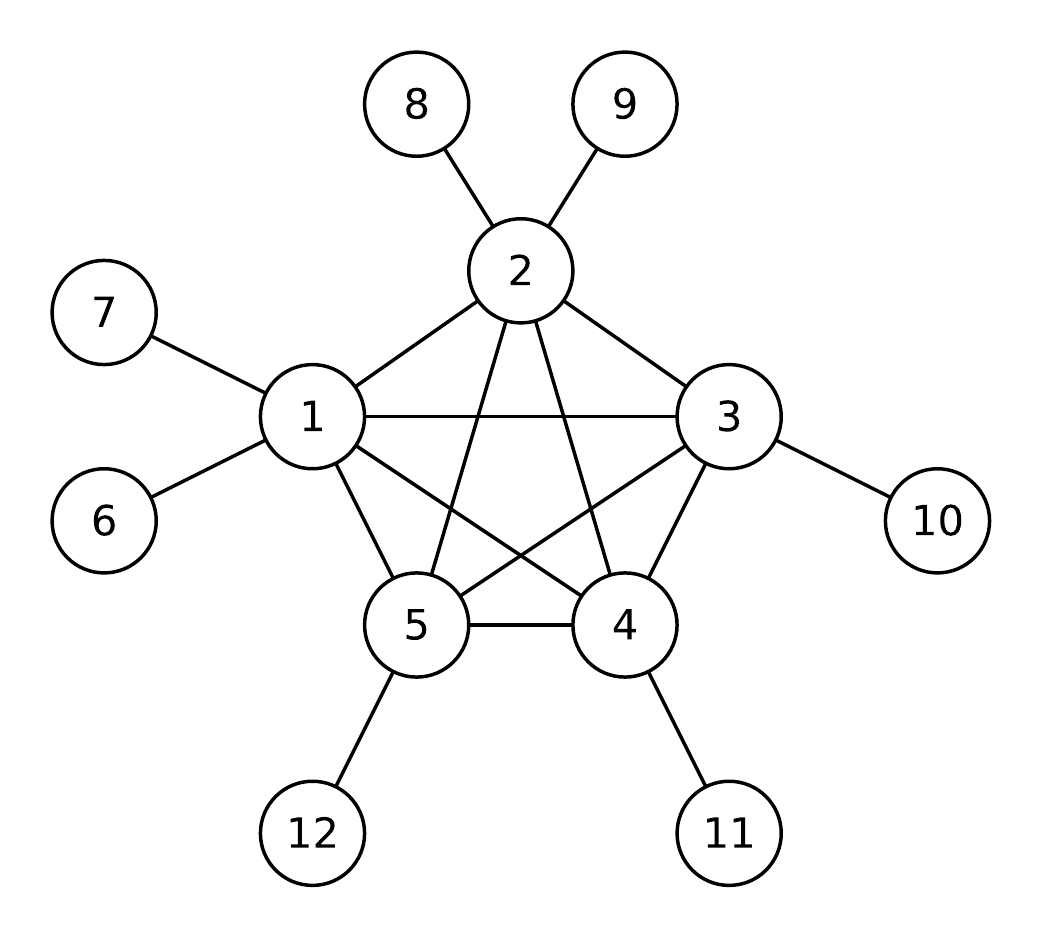}
     \end{minipage}\hfill 
\end{center}
\caption{A generalized star with $12$ nodes and core of size $5$.}\label{fig:star}
\end{figure}

Roughly speaking, a generalized $k$-star is a core-periphery network with the core consisting of $k$ nodes and the periphery consisting of the remaining $n - k$ nodes. The core is a clique, each periphery node is connected to exactly one core node and they are distributed evenly across the core nodes. An example of a generalized star is depicted in \cref{fig:star}.

Now we turn to defining some auxiliary quantities. For any $n \ge 3$ such that $n \bmod 6 \neq 3$ we define
\[
\despayoff_{0}(n) = \despayoff_{1}(n) = f\left(\left\lfloor \frac{n}{2} \right\rfloor\right) + f(1)\IND_{\{n \bmod 2 = 1 \}} \, ,
\]
and for every $n$ such that $n \bmod 6 = 3$ we define
\begin{equation}
\despayoff_{0}(n) = \despayoff_{1}(n) = \max\left( 2f\left( \frac{n}{3} \right), f\left(\frac{n-1}{2} \right) + f(1)\right) \, . \label{eq:divisible_by_three}
\end{equation}
Given $n$ nodes, $\despayoff_{0}(n)$ is the maximal network value the designer can secure against a strategic adversary by choosing an unprotected network composed of three components of equal size or two components of equal size and possibly one disconnected node. This is also the maximal network value the designer can secure by choosing such a network with one protected node, because, in the worst case scenario, the protected node is byzantine and may be infected.

For every $k \in \{3, \dots, n\}$, let
\begin{equation*}
\despayoff_k(n) = \left\{\begin{array}{ll}
                f\left(n-1-\frac{n-1}{k}\right) + f(1), & \textrm{if $n \bmod k = 1$} \, , \\
                f\left(n - \lceil \frac{n}{k} \rceil \right), & \textrm{otherwise} \, .
                \end{array}\right.
\end{equation*}
Given $n$ nodes and $k \ge 3$, $\despayoff_k(n)$ is the network value that the designer can secure by choosing a generalized $k$-star, with one node disconnected in the case of $k$ dividing $n-1$, having all core nodes protected and all periphery nodes unprotected.

We also define the following quantities:
\begin{align}
&A_{q} = \min \left( f(n - q), f\left(\left\lfloor\frac{n - q}{2}\right\rfloor\right) + f(q) \right) \, , \nonumber \\
&B_{q} = \min \left( f(n - q - 1), f\left(\left\lfloor\frac{n-q-1}{2}\right\rfloor\right) + f(q) \right) \, , \nonumber \\
&h_q(n) = \max\left(A_{q}, B_{q} + f(1) \right) \, , \label{eq:size_two_single} \\
&\despayoff_2(n) = \max_{q \in \{0, \ldots, n - 2 \}} h_{q}(n) \, . \label{eq:size_two}
\end{align}
Given $n$ nodes, $\despayoff_2(n)$ is the network value that the designer can secure by choosing a network composed of a generalized $2$-star with a protected core and unprotected periphery, an unprotected component (of size $q \in \{0, \dots, n - 2\}$), and possibly one node disconnected from both of these components.

Finally, we define
\begin{equation*}
K^{*}(n,c) = {\arg\max}_{k\in \{0,\ldots,n\}} \despayoff_k(n) - kc \, .
\end{equation*}

We point out that $K^{*}(n,c)$ never contains $1$ (because $c > 0$). We are now ready to state the result characterizing equilibrium defended network and pessimistic equilibrium payoffs to the designer.

\begin{proposition}
\label{th:centr}
Let $\nbyz = \ninf = 1$, $n \ge 3$,
$c > 0$, and $k \in K^{*}(n,c)$. Then, the pessimistic equilibrium payoff to the designer is equal to $\globdespayoff(n,c) = \despayoff_{k} - kc$. Moreover, there exists an equilibrium network $(G, \Delta)$ that has $\abs{\defense} = k$ protected nodes and the following structure:
\begin{compactenum}[i)]
\item $G$ has at most three connected components.
\item If $k \ge 3$ and $n \bmod k \neq 1$, then $G$ is a generalized $k$-star with protected core and unprotected periphery. 
\item If $k \ge 3$ and $n \bmod k = 1$, then $G$ is composed of a generalized $k$-star of size $(n-1)$ with protected core and unprotected periphery and a single unprotected node.
\item If $k = 0$ and $n \bmod 6 \neq 3$, then $G$ has two connected components of size $\lfloor n/2 \rfloor$ and, if $n \bmod 2 = 1$, a single unprotected node.\label{it:no_def}
\item If $k = 0$ and $n \bmod 6 = 3$, then $G$ either has the structure described in \cref{it:no_def} or $G$ is composed of three components of size $n/3$, depending on the term achieving maximum in~\cref{eq:divisible_by_three}.
\item If $k = 2$, then $G$ is composed of a generalized $2$-star with protected core and unprotected periphery, an unprotected component of size $q \in \{0, \dots, n-2\}$ and, possibly, a single unprotected node. The size $q$ is the number achieving maximum in \cref{eq:size_two}. The existence of a single unprotected node depends on the term achieving maximum in \cref{eq:size_two_single}.
\end{compactenum}
\end{proposition}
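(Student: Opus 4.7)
My plan is to fix the number $k = |\Vardelta|$ of defended nodes, identify the maximum worst-case gross value
\[
\despayoff_k(n) \;:=\; \max_{(G,\Vardelta) \colon |\Vardelta|=k} \; \min_{\bm\theta \in \Theta} \; \max_{I} \; \Varphi\bigl(R(G \mid \Vardelta, \byznodes, I)\bigr)
\]
attainable with that many defenses, and then take the maximum of $\despayoff_k(n) - kc$ over $k$. Since the cost $kc$ is a constant inside the inner min--max, the designer's pessimistic equilibrium payoff is exactly $\max_{k}(\despayoff_k(n) - kc)$; matching this to $K^{*}(n,c)$ pinpoints both the equilibrium value $\globdespayoff(n,c)$ and the optimizing structures claimed in (i)--(vi). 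So everything reduces to computing $\despayoff_k(n)$ for each $k$ and exhibiting an explicit witness network.

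For $k=0$ the defense is absent and the byzantine is irrelevant (attacking any node destroys its whole component of $G$), so the problem becomes: partition $V$ so as to maximize the sum of $f$-values of the surviving components after the adversary removes the worst one. Strict convexity of $f$ with $f(0)=0$ gives a smoothing argument that reduces the candidates to balanced partitions, and the two inequalities in \cref{eq:move_half} are precisely what is needed to eliminate four-or-more components and to compare the two-component and three-component candidates. Outside the boundary case $n \bmod 6 = 3$, where both extremes may tie as in \cref{eq:divisible_by_three}, the two-component candidate wins. For $k=1$, a worst-case byzantine placement puts the byzantine on the single defended node, so the network carries no effective defense; combined with $c > 0$ this gives $\despayoff_1(n) \le \despayoff_0(n)$ and rules out $1 \in K^{*}(n,c)$.

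For $k \ge 3$ I would first verify that a generalized $k$-star with protected core achieves $\despayoff_k(n)$: the worst-case byzantine is a core node carrying a largest arm, the adversary's best attack eliminates that core together with its arm, and the residual is a connected subnetwork of size $n - \lceil n/k \rceil$ (respectively a connected subnetwork of size $n - 1 - (n-1)/k$ plus one isolated node when $n \bmod k = 1$). For the matching upper bound I would chain together local rearrangement lemmas: (a)~defended nodes may be consolidated into a single component without loss; (b)~adding edges among defended nodes to form a clique does not help the adversary, because when any one of them becomes byzantine the set reachable by the attack cannot grow; (c)~redistributing the periphery evenly among the clique minimizes the largest arm and hence the worst-case damage. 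Together these reductions certify that no $k$-defended network beats a generalized $k$-star. The case $k=2$ is handled analogously but is more delicate: a single surviving defender cannot shield an arbitrarily wide star, so the designer may benefit from splitting off an auxiliary unprotected component of size $q$; optimizing $h_q(n)$ over $q$ yields \cref{eq:size_two}, and \cref{eq:size_two_single} records whether an additional isolated node gives a further improvement.

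The main obstacle I anticipate is the structural upper bound for $k \ge 3$ (and the parallel argument for $k = 2$): each smoothing step must be shown to be nondecreasing in the designer's pessimistic value, while both the worst-case byzantine placement and the adversary's subsequent best response may jump after a rearrangement, so the bookkeeping has to be done carefully in several subcases (byzantine on a defended vs.\ undefended node, attack on the star vs.\ on an auxiliary component). Once $\despayoff_k(n)$ has been identified for every $k$, the outer maximization over $k$ is immediate, and the equilibrium networks in (i)--(vi) are simply the structures witnessing each $\despayoff_k(n)$.
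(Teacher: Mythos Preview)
Your plan is essentially the paper's proof: fix $k$, show via local rearrangements (make the defended nodes a clique, reattach the unprotected nodes of that component as leaves, balance the arms, then absorb the outside unprotected components through exactly the case analysis you anticipate) that the generalized $k$-star witnesses $\despayoff_k(n)$, and finally optimize over $k$. One slip to fix: in your displayed definition of $\despayoff_k(n)$ the innermost $\max_{I}$ should be $\min_{I}$, since the adversary chooses $I$ to minimize the residual value; the rest of your argument treats the adversary correctly.
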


The intuitions behind this result are as follows. When the cost of defense is high, then the designer is better off by not using any defense and partitioning the network into several components. Since the strategic adversary will always eliminate a maximal such component, the designer has to make sure that all the components are equally large. Due to the divisibility problems, one component may be of lower size. Thanks to our assumptions on the component value function $f$, the number of such components is at most three. Moreover, if there are exactly three components, then they are of equal size or the smallest one has size $1$.

When the cost of defense is sufficiently low, then it is profitable for the designer to protect some nodes. If the number of protected nodes is not smaller than $3$, then, by choosing a generalized $k$-star with fully protected core (of optimal size $k \ge 3$ depending on the cost) and unprotected periphery, the designer knows that the strategic adversary is going to attack either the byzantine node (if she is among the core nodes) or any unprotected node (otherwise). An attack on the byzantine core node destroys that node and all periphery nodes attached to her. Thus, in the worst case, a core node with the largest number of periphery nodes connected to her is byzantine. By distributing the core nodes evenly, the designer minimizes the impact of this worst case scenario. Due to the divisibility problems, it may happen that some of the core nodes are connected to a higher number of periphery nodes. If this is the case for one core node only, then it is better for the designer to disconnect this one node from the generalized star. By doing so, the designer spares this node from destruction.

The case when there are exactly $2$ protected nodes is special. Indeed, in this case, choosing a generalized $2$-star with protected core is not better than using no protection at all. This is because, in the worst case, the byzantine node is among the two protected ones. Therefore, it would be better for the designer to split the network into two unprotected components -- this would result in the same network value after the attack without the need to pay the cost of protection. On the other hand, if the network consists of a generalized $2$-star with protected core and an unprotected component, then the argument above ceases to be valid: even if the byzantine node is among the protected ones, splitting them may give the adversary an incentive to destroy the unprotected component. Therefore, a protection of $2$ nodes may be used as a resource that ensures that one component survives the attack.

It is interesting to compare this result to an analogous result obtained in~\cite{CDG14,CDG17} for a model without byzantine nodes. There, depending on the cost of protection, three equilibrium protected networks are possible: an unprotected disconnected network (like in the case with a byzantine node), a centrally protected star, and a fully protected connected network. The existence of a byzantine node leads to a range of core-protected networks between the centrally protected star and the fully protected clique (which is a generalized $n$-star). Notice that pessimistic attitude towards incomplete information results in the star network never being optimal: if only one node is protected, then, in the worst case, the designer expects this node to be byzantine, which leads to loosing all nodes after the attack by the adversary. Therefore, at least two nodes must be protected if protection is used in an equilibrium. The proof of \cref{th:centr} is given in \cref{ap:centralized}.

\begin{example}
\Cref{example50} presents how the optimal network changes for different cost values when $f(x) = x^{2}$ and $n \in \{12,30,50\}$. For these values of $n$, it is never optimal to have one node that is disconnected from the rest of the network. Moreover, as we can see, for a given number $n$ of nodes, not all possible generalized $k$-stars arise as optima. It is interesting to note that $3$-stars have never appeared in our experiments as optimal networks for the value function $f(x) = x^{2}$. Similarly, we have not found an example where it is optimal to defend exactly $2$ nodes. The case where there is no defense but the network is split into $3$ equal parts arises when $n = 9$ and the cost is high enough (i.e., $c > 6.2$), as already established in~\cite{CDG14}.
\end{example}

\begin{remark}
In this section, we have characterized the optimal networks for the case $\nbyz = \ninf = 1$. Nevertheless, we have not found a network that has a substantially different structure than the ones described here and performs better for general values of $\nbyz$ and $\ninf$. We therefore suspect that the characterization for the general case is similar to the case $\nbyz = \ninf = 1$.
\end{remark}

\begin{table}
\footnotesize

\begin{tabular}{|l|C{2cm}|l|l|C{2cm}|l|l|C{2cm}|}
\cline{8-8}
\multicolumn{7}{c|}{} & $n = 50$ \\
\cline{5-5} \cline{7-8}
\multicolumn{4}{c|}{}&$n=30$ & &$ c < 3.88 $ & $50$-star \\
\cline{4-5} \cline{7-8}
\multicolumn{3}{c|}{}&$ c < 3.80 $ & $30$-star & & $c \in (3.88, 11.875)$&$25$-star \\
\cline{2-2}\cline{4-5}\cline{7-8}
\multicolumn{1}{c|}{}&$n=12$&&$c \in (3.80, 11)$ &$15$-star & & $c \in (11.875, 23.25)$&$17$-star \\
\cline{1-2}\cline{4-5}\cline{7-8}
$ c < 3.50 $ & $12$-star & & $c \in (11, 26)$ &$10$-star & &$c \in (23.25,30.(3))$&$13$-star \\
\cline{1-2}\cline{4-5}\cline{7-8}
$c \in (3.50, 9.50)$ &$6$-star & & $c \in (26,49)$ &$6$-star & &$c \in (30.(3), 85)$&$10$-star \\
\cline{1-2}\cline{4-5}\cline{7-8}
$c \in (9.50, 11.25)$ &$4$-star & & $c \in (49, 70.20)$ &$5$-star& &$c \in (85, 195)$ &$5$-star \\
\cline{1-2}\cline{4-5}\cline{7-8}
$c > 11.25$ &two disconnected components of equal size & & $c > 70.20$&two disconnected components of equal size& &$c > 195$ &two disconnected components of equal size \\
\cline{1-2}\cline{4-5}\cline{7-8}
\end{tabular}
\vspace*{0.5cm}

\caption{Optimal networks for $n \in \{12, 30, 50\}$.} \label{example50}
\end{table}

\subsection{Decentralized defense}
Now we turn attention to the variant of the model where defense decisions are decentralized. Our goal is to characterize the inefficiencies caused by decentralized protection decisions for general values of $\nbyz$ and $\ninf$. To this end, we need to compare equilibrium payoffs to the designer under centralized and decentralized defense. We start by establishing two results about the existence of equilibria in the decentralized defense game. 

Firstly, since the game is finite, we get equilibrium existence by Nash theorem. Notice that our use of the pessimistic aggregation of the incomplete information about types of nodes determines a game where the utilities of the nodes and the designer are defined by the corresponding pessimistic utilities. This game is finite and, by Nash theorem, it has a Nash equilibrium in mixed strategies. This leads to the following existence result.

\begin{proposition}
\label{pr:exist}
There exists an equilibrium of $\Gamma$.
\end{proposition}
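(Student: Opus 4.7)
My plan is to use standard backward induction on the finite extensive-form game $\Vargamma$. First, I would observe that, after aggregating the incomplete information about types via pessimistic preferences, $\Vargamma$ is a finite multi-stage game: the set $\mathcal{G}(V)$ of networks is finite, each node's strategy space $S^{j}$ is a finite set of functions $\mathcal{G}(V) \times \{0,1\} \to \{0,1\}$, and the adversary's strategy space $S^{\adv}$ is also finite, as a set of functions with finite domain and finite codomain.

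For the inductive step, fix an arbitrary network $G \in \mathcal{G}(V)$ and consider the subgame $\Vargamma(G)$. Since the adversary moves last and has complete information (observing $G$, $\Vardelta$, and $\bm{\theta}$), I first fix any pointwise best response $x^{*}_{G}$ for the adversary, which is well-defined because $\binom{\nodes}{\ninf}$ is finite. The remaining game among the nodes is a finite normal-form game: each genuine node's payoff is the expected pessimistic payoff under $x^{*}_{G}$ from \cref{eq:pess_payoff_node}, while each byzantine node's payoff coincides with that of the adversary. The pessimistic payoff of any pure strategy profile is a fixed real number (an infimum over finitely many consistent type profiles), so the expected pessimistic payoff is multilinear in the profile of mixing probabilities and, in particular, continuous and linear in each player's own mixed strategy. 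Nash's theorem therefore yields a mixed-strategy Nash equilibrium $\Vardelta^{*}_{G}$ of this node-only subgame.

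Next, I would fix such an equilibrium $\Vardelta^{*}_{G}$ for every $G$ via a selection rule and let $\hat{v}(G)$ denote the resulting expected pessimistic payoff to the designer, defined as in \cref{eq:pess_payoff_des}. Since $\mathcal{G}(V)$ is finite, $\hat{v}$ attains its maximum at some $G^{*}$. The profile in which the designer chooses $G^{*}$ in the first stage, and the nodes together with the adversary play $(\Vardelta^{*}_{G}, x^{*}_{G})$ in every subgame $\Vargamma(G)$, is by construction a subgame perfect mixed-strategy equilibrium of $\Vargamma$.

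The only point that requires any care is verifying that Nash's theorem indeed applies after the pessimistic aggregation of payoffs. This is immediate once the multilinearity observation above is made, so no substantial technical obstacle is expected: the rest is routine backward induction on a finite extensive-form game with perfect recall.
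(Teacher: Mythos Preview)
Your overall plan—observe finiteness, fix an optimal adversary response in each subgame, apply Nash's theorem to the residual simultaneous-move game among the nodes, and then backward-induct to the designer's choice—is precisely the high-level argument the paper gives in the paragraph preceding the proposition. In that sense your approach is the intended one.

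The paper's formal proof, however, does something slightly different and stronger: it shows that there is an equilibrium in which the nodes' strategies do \emph{not} depend on their types. It obtains this by passing to an auxiliary model where every node is treated as genuine (so strategies are maps $\mathcal{G}(V)\to\{0,1\}$ with no type argument), finds a subgame perfect equilibrium there by the standard finite-game argument, and then checks that this profile remains an equilibrium in the original model because a byzantine node cannot improve by a unilateral deviation (her protection is fake, so her choice does not affect the attack graph, the residual network, or the adversary's optimal payoff).

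This auxiliary-model detour also resolves a point your write-up leaves imprecise. When you say ``each genuine node's payoff is the expected pessimistic payoff \ldots\ while each byzantine node's payoff coincides with that of the adversary,'' you are describing interim payoffs indexed by the realized type, not payoffs of a single well-posed normal-form game to which Nash's theorem applies directly. To make your step rigorous you must either (i) treat each pair $(j,\theta_j)$ as a separate player and explain how the byzantine-type player's payoff is a function of the strategy profile alone (a priori it depends on the full type profile $\bm{\theta}$, which that player does not observe), or (ii) note—as the paper effectively does—that the byzantine type is always indifferent, so any action is a best response for her and the equilibrium analysis reduces to the genuine types. This is a small patch rather than a real gap; once it is made, your argument and the paper's coincide.
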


\begin{proof}
It can be shown that a stronger statement holds. More precisely, one can prove that for any $n, c$ there exists an equilibrium $\equilibrium$ such that the strategies of the nodes do not depend on their types. Let us sketch the proof. We consider a modified model in which the nodes do not know their types (i.e., every node thinks that she is genuine, but some of them are byzantine). In this model, the (mixed) strategies of nodes are functions $\tilde{\delta}_{j} \colon \graphs(\nodes) \to \simplex(\{0,1\})$,\footnote{
If $X$ is a finite set, then by $\simplex(X)$ we denote the set of all probability measures on $X$.
}
and every node receives a pessimistic utility of a genuine node, as defined in \cref{eq:pess_payoff_node}. The strategies and payoffs to the adversary and the designer are as in the original model. Let $\overbar{x} \colon \graphs(\nodes) \times 2^{\nodes} \times \binom{\nodes}{\nbyz} \to \binom{\nodes}{\ninf}$ denote any optimal strategy of the adversary (i.e., a function that, given a defended network and the position of the byzantine nodes $\byznodes$, returns a subset of nodes that is optimal to infect in this situation). If we fix $\overbar{x}$, then the game turns into a two stage game (the designer makes his action first and then the nodes make their actions) with complete information. Therefore, this game has a subgame perfect equilibrium in mixed strategies. This equilibrium, together with $\overbar{x}$, forms an equilibrium $\equilibrium$ in the original model, because, in the original model, a byzantine node cannot improve her payoff by a unilateral deviation.
\end{proof}

Fix the parameters $\nbyz, \ninf$ and let $\mathcal{E}(n, c)$ denote the set of all equilibria of $\Vargamma$ with $n$ nodes and the cost of protection $c > 0$. Let $\globdespayoff(n, c)$ denote the best payoff the designer can obtain in the centralized defense game (as discussed in \cref{sec:centralized}).
The \emph{price of anarchy} is the fraction of this payoff over the minimal payoff to the designer that can be attained in equilibrium of $\Vargamma$ (for the given cost of protection $c$),
\begin{equation*}
\poa(n, c) = \frac{\hat{U}^{\des}_{\star}(n,c)}{\min_{\bm{e} \in \mathcal{E}(n,c)} \Ex\hat{U}^{\des}(\bm{e})} \, .
\end{equation*}

Although pure strategy equilibria may not exist for some networks, they always exist on generalized stars. Moreover, when these stars are large enough, by choosing such a star, the designer can ensure that all genuine core nodes are protected. 
This is enough to characterize the price of anarchy as $n$ goes to infinity (with a fixed cost $c$). The next proposition characterizes equilibria on generalized stars.

    \begin{proposition}\label{thm:ROEchar}
    Let $\equilibrium \in \Equilibria$ be any equilibrium of $\Vargamma$. Let $G = (V, E)$ be a generalized $k$-star. Denote $\card{V} = n$, $x = \left\lfloor \frac{n}{k} \right\rfloor - \ninf + 1$, and $y = n - \nbyz\left\lfloor \frac{n}{k} \right\rfloor$. Furthermore, suppose that $n \ge k \ge \nbyz+1$ and $x \ge 2$. If the cost value $c$ belongs to one of the intervals $(0, f(1))$, $(f(1), \frac{f(x)}{x})$, $(\frac{f(y)}{y}, +\infty)$, then the following statements about $\equilibrium$ restricted to $\Vargamma(G)$ hold:
     \begin{compactitem}
     \item all genuine nodes use pure strategies
      \item if $c < f(1)$, then all genuine nodes are protected
      \item if $f(1) < c < \frac{f(x)}{x}$, then 
            all genuine core nodes are protected and 
            all genuine periphery nodes are not protected
      \item if $\frac{f(y)}{y} < c$, then all genuine nodes are not protected.
    \end{compactitem}
  \end{proposition}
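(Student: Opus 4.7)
My plan is to analyze, in the subgame $\Vargamma(G)$, the pessimistic utilities of protecting and not protecting for each genuine node, and to show that in each of the three cost regimes one action is a strict best response (regardless of other nodes' equilibrium strategies, possibly conditional on the periphery's action). Strict best responses force unique pure-strategy play in any equilibrium, which yields both the pure-strategy assertion and the specific defense pattern. A preliminary step used throughout is to show that the pessimistic utility of not protecting equals zero for every genuine node $j$. The lower bound is trivial since an unprotected genuine node either survives with nonnegative gross payoff or is infected and gets~$0$. For the matching upper bound I would exhibit a type profile forcing $j$ to be infected under the adversary's best response: if $j$ is core, I place a byzantine at any other core node, so that $j$ and this byzantine lie in the same clique-induced attack-graph component that the adversary then destroys; if $j$ is periphery, I place a byzantine at $j$'s core neighbor, yielding the same conclusion.

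For the protect side, the universal bound ``pessimistic utility of protecting $\geq f(1) - c$'' holds because a protected genuine node always survives with $|C_j| \geq 1$, giving gross payoff at least $f(1)$. This immediately settles the case $c < f(1)$: the pessimistic utility of protect is at least $f(1) - c > 0$, strictly exceeding the value $0$ attained by not protect, so every genuine node protects. For the case $c > f(y)/y$, I would establish the opposite bound ``pessimistic utility of protect $\leq f(y)/y - c < 0$'' by constructing a type profile in which all $\nbyz$ byzantine nodes occupy core positions in smallest-size groups. This yields a single large attack-graph component of size $\nbyz \lfloor n/k \rfloor$, which the adversary destroys in one infection (together with $\ninf - 1$ additional isolated periphery), leaving a residual of size at most $y$ and hence $|C_j| \leq y$ for any surviving protected~$j$. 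Therefore protecting is strictly dominated by not protecting, and no genuine node protects.

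The intermediate case $f(1) < c < f(x)/x$ requires a two-step argument. First, for a genuine periphery $j$, the type profile in which $j$'s core neighbor is byzantine yields pessimistic utility of protect $\leq f(1) - c < 0$: the adversary destroys the attack-graph component of $j$'s (byzantine) core, and since $j$ is protected but has only her destroyed core as a neighbor in $G$, she ends up isolated with $|C_j| = 1$. Hence every genuine periphery strictly prefers not to protect in any equilibrium. Second, given that periphery do not protect, I would prove the uniform lower bound that for any genuine core $j$, any type profile, and any (possibly mixed) strategies of the remaining genuine core, $|C_j| \geq x$. The worst case arises when every other core node is in the attack graph (byzantine or unprotected genuine), in which case one adversary attack destroys the entire component they and their periphery span; the remaining $\ninf - 1$ attacks can strike at most $\ninf - 1$ of $j$'s periphery, and the hypothesis $x \geq 2$ implies $\ninf - 1 \leq \lfloor n/k \rfloor - 2$, so at least one periphery node of $j$ survives. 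A direct count yields $|C_j| \geq \lfloor n/k \rfloor - \ninf + 1 = x$, whence pessimistic utility of protect is $\geq f(x)/x - c > 0$ and strictly exceeds that of not protect. So every genuine core protects.

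The pure-strategy assertion then follows in all three cases: any mixed strategy placing positive weight on the suboptimal action would yield strictly lower pessimistic utility than the pure strategy on the strict best-response action, so every genuine node plays deterministically. The main technical obstacle I anticipate is the uniform lower bound $|C_j| \geq x$ for a protected genuine core node in the middle case---it must hold over all type profiles and all strategies of the other genuine core nodes---and it is precisely here that the hypotheses $k \geq \nbyz + 1$ and $x \geq 2$ are used critically.
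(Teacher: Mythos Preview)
Your proposal is correct and follows essentially the same approach as the paper: compute uniform bounds on the pessimistic utility of protecting versus not protecting for core and periphery nodes (not-protect $=0$, protect $=f(1)-c$ for periphery, protect $\in[f(x)/x-c,\,f(y)/y-c]$ for core) and deduce strict dominance in each cost regime. The one step you should make explicit is your repeated assertion that the adversary's equilibrium strategy necessarily destroys the component containing a byzantine core node---the paper isolates exactly this as a preliminary lemma (any optimal attack infects a byzantine core node whenever one exists), and it is what justifies your upper bounds; also note that the paper establishes the core lower bound $|C_j|\ge x$ for \emph{arbitrary} $\Delta$, so your two-step conditioning on the periphery being unprotected is unnecessary.
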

  
  The proof of \cref{thm:ROEchar} requires an auxiliary lemma.

  \begin{lemma}\label{attack_on_byz}
  Let $\equilibrium \in \Equilibria$ be any equilibrium of $\Vargamma$ and $\overbar{x} \colon \graphs(\nodes) \times 2^{\nodes} \times \binom{\nodes}{\nbyz} \to \simplex(\binom{\nodes}{\ninf})$ denote the (possibly mixed) strategy of the adversary in this equilibrium.
  Let $(G, \Delta)$ be a network such that $G$ is a generalized $k$-star. Furthermore, suppose that $\lfloor \frac{n}{k} \rfloor \ge 2$, $n \ge 3$, and that the set of byzantine nodes $\byznodes$ contains a core node. Then, $\overbar{x}(G, \Delta, \byznode)$ infects this node with probability one. 
  \end{lemma}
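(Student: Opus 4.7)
The plan is to show that every pure attack set $\hat{I}$ in the support of $\overbar{x}(G,\Delta,\byznode)$ meets $C^{\ast}:=C_b(A(G\mid\Delta,\byznodes))$. Because the core of $G$ is a clique, $C^{\ast}$ is a single component of the attack graph containing all byzantine or unprotected core nodes together with every byzantine or unprotected periphery attached to one of them. Infecting any element of $C^{\ast}$ eliminates all of $C^{\ast}$, and since $b\in C^{\ast}$ this is exactly the statement that $b$ is infected with probability one. I argue by contradiction: assume some pure attack $\hat{I}$ in the support of $\overbar{x}$ satisfies $\hat{I}\cap C^{\ast}=\varnothing$, fix any $j\in\hat{I}$, and consider the deviation $\hat{I}':=(\hat{I}\setminus\{j\})\cup\{b\}$.

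If $j$ is genuine and protected, then $C_j(A(G\mid\Delta,\byznodes))=\varnothing$, so the deviation only enlarges the destruction set by the nonempty set $C^{\ast}$; strict monotonicity and super-additivity of $f$ (both of which follow from strict convexity and $f(0)=0$) then make the residual value strictly smaller, contradicting optimality of $\hat{I}$. Otherwise every element of $\hat{I}$ is a byzantine or unprotected periphery node whose core neighbour is genuine and protected, i.e.\ an isolated vertex of the attack graph; in this case the destruction set of $\hat{I}$ is just $\hat{I}$ itself, and the residual after $\hat{I}$ is the $k$-star with $\ninf$ periphery leaves removed, a single component of value $f(n-\ninf)$. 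For $\hat{I}'$ the destruction set becomes $(\hat{I}\setminus\{j\})\cup C^{\ast}$; removing $C^{\ast}$ in particular detaches every genuine protected periphery previously attached to a byzantine or unprotected core node. Writing $s:=|C^{\ast}|$ and $t$ for the number of these newly isolated periphery vertices, the residual is one big component of size $n-\ninf-s-t+1$ together with $t$ singletons, so its value equals $f(n-\ninf-s-t+1)+t\,f(1)$.

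The proof then reduces to the key inequality $f(n-\ninf-s-t+1)+t\,f(1)<f(n-\ninf)$. Setting $M:=n-\ninf$ and $\alpha:=s+t-1$, strict convexity together with $f(0)=0$ yields both strict super-additivity $f(M)>f(M-\alpha)+f(\alpha)$ (valid when $M-\alpha\geq 1$ and $\alpha\geq 1$) and monotonicity of $x\mapsto f(x)/x$ on positive integers, the latter giving $f(\alpha)\geq\alpha\,f(1)\geq t\,f(1)$; combining the two yields the claim. The precondition $M-\alpha\geq 1$ is immediate from the pairwise disjointness of $\hat{I}$, $C^{\ast}$, and the $t$ newly isolated vertices, which forces $\ninf+s+t\leq n$. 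The main subtlety is securing $\alpha\geq 1$: this is where the hypothesis $\lfloor n/k\rfloor\geq 2$ enters, because it ensures that each byzantine or unprotected core node has at least one periphery neighbour, so $s+t$ equals the number of byzantine or unprotected core nodes plus the total number of periphery attached to them, hence $s+t\geq 2$ and $\alpha\geq 1$.
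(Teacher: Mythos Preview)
Your proof is correct and follows essentially the same swap argument as the paper: assume an optimal attack $\hat I$ misses $C^\ast$, replace some $j\in\hat I$ by $b$, and show the adversary strictly gains. The paper handles the second case more tersely (``attacking $b$ not only destroys one node but also disconnects the network''), whereas you carry out the explicit comparison $f(n-\ninf-s-t+1)+t\,f(1)<f(n-\ninf)$ via super-additivity and monotonicity of $f(x)/x$; both routes rely on $\lfloor n/k\rfloor\ge 2$ in the same way, to guarantee $b$ has a periphery neighbour.
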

  
  \begin{proof}
  Since $\equilibrium$ is an equilibrium and the adversary has complete information about the network before making his decision, his strategy $\overbar{x}(G, \Delta, \byznode)$ is a probability distribution over the set of subsets of nodes that are optimal to attack. Let $b \in \byznodes$ denote any byzantine node that is also a core node. We will show that any optimal attack infects $b$.
  
To do so, fix any set of attacked nodes $I \in \binom{\nodes}{\ninf}$ and suppose that attacking $I$ does not infect $b$. Given the structure of generalized $k$-star, we see that $I$ consists of genuine protected nodes and periphery nodes that are connected to genuine protected core nodes. To finish the proof, fix any node $j \in I$ and observe it is strictly better for the adversary to attack the set $I \cup \{ b\} \setminus \{j\}$. Indeed, if $j$ is a genuine protected node, then attacking it does nothing, while attacking $b$ destroys at least one more node. Moreover, if $j$ is a periphery node connected to a genuine core protected node, then attacking $b$ not only destroys one node but also disconnects the network ($b$ is connected to at least one periphery node because $\lfloor \frac{n}{k} \rfloor -1 \ge 1$).
  \end{proof}

We are now ready to present the proof of \cref{thm:ROEchar}.
  
  \begin{proof}[Proof of \cref{thm:ROEchar}]
  Let $\overbar{x} \colon \graphs(\nodes) \times 2^{\nodes} \times \binom{\nodes}{\nbyz} \to \simplex(\binom{\nodes}{\ninf})$ denote the strategy of the adversary in $\equilibrium$ and let $\defense$ be any choice of protected nodes on $G$, $\defense \subset V$. Let $j \in V$ be a genuine node. 
   
   First, suppose that $j \notin \defense$. We will show that the pessimistic payoff of $j$ is equal to $0$. On the one hand, this payoff is nonnegative for every possible choice of the infected node. On the other hand, we can bound it from above by supposing that there exists a byzantine node $b \in \byznodes$ that is a core node and a neighbor of $j$. Then, \cref{attack_on_byz} shows that $\overbar{x}$ infects $b$, and the pessimistic payoff of $j$ is not greater than $0$. 
   
   Second, suppose that $j \in \defense$. Then, we have two possibilities. If $j$ is a periphery node, then the same argument as above shows that the pessimistic payoff of $j$ is equal to $f(1) - c$. If $j$ is a core node, then her payoff is bounded from below by $\frac{f(x)}{x} - c$ (where $x = \lfloor \frac{n}{k} \rfloor - \ninf + 1$) for every possible choice of the set of infected nodes. Moreover, by supposing that every byzantine node is a core node, we see that the pessimistic payoff of $j$ is bounded from above by $\frac{f(y)}{y} - c$ (where $y = n - \nbyz\lfloor \frac{n}{k} \rfloor$).
   
   Since the estimates presented above are valid for any choice of $\defense$, we get the desired characterization of equilibria.
  \end{proof}
  
Our main result estimates the price of anarchy using \cref{thm:ROEchar}.
\begin{theorem}
\label{th:poa}
Suppose that for all $t \ge 0$ the function $f$ satisfies 
\[ 
\lim_{n \to +\infty} f(n)/f(n - t) = 1 \, .
\]
Then, for any cost level $c > 0$ and any fixed parameters $\nbyz \ge 1$, $\ninf \ge 1$ we have 
\[
\lim_{n \to +\infty} \poa(n,c) = 1 \, .
\]
\end{theorem}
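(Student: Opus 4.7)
The plan is to lower bound the designer's expected payoff in every subgame-perfect equilibrium by the payoff he can secure by unilaterally playing a suitable network, and to combine this with the trivial upper bound $\globdespayoff(n,c)\le f(n)$. More precisely, if $\equilibrium$ is any equilibrium and the designer deviates to a network $G$, his expected payoff cannot increase; moreover, by subgame perfection the continuation of $\equilibrium$ in $\Vargamma(G)$ is itself a Nash equilibrium of $\Vargamma(G)$. Taking $G$ to be a suitable generalized $k$-star will let us invoke \cref{thm:ROEchar} to control this continuation payoff.

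A preliminary fact is that $g(x):=f(x)/x\to +\infty$. Since $f$ is strictly convex with $f(0)=0$, $g$ is nondecreasing; rewriting $f(3x)\ge 2f(2x)$ from \cref{eq:move_half} as $g(3x)\ge \tfrac{4}{3}g(2x)\ge \tfrac{4}{3}g(x)$ and iterating gives $g(3^{m}x)\ge (4/3)^{m}g(x)\to +\infty$. In particular $n=o(f(n))$.

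Fix $c>0$ with $c\ne f(1)$ and choose an integer $M=M(c)$ so large that $f(M-\ninf+1)/(M-\ninf+1)>c$. Set $k:=\lfloor n/M\rfloor$ and let $G$ be a generalized $k$-star. For $n$ sufficiently large, $k\ge \nbyz+1$, $\lfloor n/k\rfloor\in\{M,M+1\}$, and both $x=\lfloor n/k\rfloor-\ninf+1$ and $y=n-\nbyz\lfloor n/k\rfloor$ satisfy $\min(f(x)/x,\,f(y)/y)>c$; thus \cref{thm:ROEchar} applies in its first case if $c<f(1)$ and in its second case if $c>f(1)$. In both cases every genuine core node is protected in any equilibrium of $\Vargamma(G)$. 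We then consider the worst pessimistic realisation of types, with all $\nbyz$ byzantines placed among the core nodes: because the core is a clique, a single attack on a byzantine core propagates through the clique to every other byzantine core and, from each such core, either destroys the entire attached arm (case~2, since periphery is unprotected) or leaves the genuine periphery of the arm intact but orphaned (case~1, since periphery is protected). Hence the number of nodes either destroyed or orphaned is at most $T:=\nbyz(M+1)+\ninf$, a constant depending only on $c,\nbyz,\ninf$, so the residual network value is at least $f(n-T)$, while the defence cost is at most $|\Vardelta|c\le nc$.

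Combining these ingredients yields, for $n$ large,
\[
\min_{\equilibrium\in\Equilibria(n,c)}\Ex\hat{U}^{\des}(\equilibrium)\;\ge\; f(n-T)-nc.
\]
The hypothesis $f(n)/f(n-T)\to 1$ together with $nc=o(f(n))$ gives $f(n-T)-nc\ge (1-o(1))f(n)\ge (1-o(1))\globdespayoff(n,c)$, hence $\poa(n,c)\to 1$. The main difficulty is the contagion analysis: although a single attack on a byzantine core can chain through the core clique and take down all $\nbyz$ core arms simultaneously, the choice $\lfloor n/k\rfloor\le M+1$ caps each arm at $O(1)$ size, so both the destruction in case~2 and the orphaning in case~1 stay bounded in $n$. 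The boundary value $c=f(1)$, not covered by \cref{thm:ROEchar}, can be handled by a direct analysis of the two candidate pure strategy profiles on $G$, which both yield the same constant-order damage bound.
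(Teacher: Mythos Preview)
Your proposal is correct and follows essentially the same approach as the paper: both arguments upper bound $\globdespayoff(n,c)$ by $f(n)$, establish $f(x)/x\to\infty$, choose a generalized $k$-star with arm size bounded by a constant depending on $c$, invoke \cref{thm:ROEchar} to guarantee that genuine core nodes protect, and then bound the loss (destroyed plus orphaned) by a constant $T$ so that the payoff is at least $f(n-T)-nc$. The paper sets $k=\lfloor n/(N+1)\rfloor$ while you set $k=\lfloor n/M\rfloor$, which is cosmetic; you are slightly more explicit about separating the cases $c<f(1)$ and $c>f(1)$ and about the boundary $c=f(1)$ (which the paper glosses over but which, as you note, causes no trouble since the core's strict preference to protect depends only on $f(x)/x>c$).
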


The proof requires an auxiliary lemma concerning the asymptotic behavior of the function $f$.
  
    \begin{lemma}\label{fast_growth}
  We have $\lim_{x \to +\infty} \frac{f(x)}{x} = +\infty$.
  \end{lemma}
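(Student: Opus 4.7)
\medskip

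The plan is to exploit the first inequality of \cref{eq:move_half}, $f(3x) \ge 2f(2x)$, which, after normalization by the argument, turns into a multiplicative improvement for the ratio $g(x) := f(x)/x$. Since the claim is about $g(x) \to +\infty$, I want to show that $g$ grows at least geometrically along the sequence $3^k$, and then use monotonicity of $g$ to pass from this subsequence to the full limit.

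First, I would record that $g$ is non-decreasing on $(0, \infty)$. This is the standard consequence of strict convexity together with $f(0) = 0$: for $0 < a < b$, writing $a = (a/b) \cdot b + (1 - a/b) \cdot 0$ and applying strict convexity yields $f(a) \le (a/b) f(b)$, i.e., $g(a) \le g(b)$. In particular, $g$ is monotone, and the same argument at $a=1$, $b$ arbitrary shows that $f(1) > 0$ (otherwise $f$ would be non-positive for small positive arguments, contradicting increasingness).

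Second, I would convert the structural inequality into a bound on $g$. Fix $x \ge 1$ and divide $f(3x) \ge 2 f(2x)$ by $3x$ to get
\[
g(3x) \;=\; \frac{f(3x)}{3x} \;\ge\; \frac{2 f(2x)}{3x} \;=\; \frac{4}{3} \cdot \frac{f(2x)}{2x} \;=\; \frac{4}{3}\, g(2x) \, .
\]
By monotonicity of $g$, $g(2x) \ge g(x)$, so $g(3x) \ge (4/3)\, g(x)$ for every $x \ge 1$. Iterating this bound starting from $x = 1$ gives $g(3^k) \ge (4/3)^k g(1) = (4/3)^k f(1)$ for every $k \ge 0$, and since $f(1) > 0$ and $4/3 > 1$, we obtain $g(3^k) \to +\infty$.

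Finally, for arbitrary $x \to +\infty$, choose $k = \lfloor \log_3 x \rfloor$, so $3^k \le x$, and by monotonicity of $g$,
\[
g(x) \;\ge\; g(3^k) \;\ge\; (4/3)^k f(1) \;\longrightarrow\; +\infty \, ,
\]
which is the desired conclusion. There is really no substantive obstacle here; the only thing to watch is that one must verify monotonicity of the ratio $f(x)/x$ (not just of $f$) in order to both amplify $g(2x)$ to $g(x)$ in the key step and to pass from the subsequence $3^k$ back to all $x$.
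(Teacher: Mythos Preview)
Your proof is correct and rests on the same two ingredients as the paper's: monotonicity of $g(x)=f(x)/x$ and the inequality $g(3x)\ge(4/3)g(2x)$ coming from $f(3x)\ge 2f(2x)$. The execution differs in two places. First, you derive monotonicity of $g$ directly from strict convexity together with $f(0)=0$; the paper instead passes through the secant functions $g_t(x)=(f(x+t)-f(t))/x$, shows each is increasing via the three-slope inequality, and then lets $t\to 0$ (which requires invoking continuity of $f$). Your route is shorter and avoids the limit. Second, you conclude constructively by iterating $g(3x)\ge(4/3)g(x)$ along powers of $3$ to obtain $g(3^k)\ge(4/3)^k f(1)$ and then use monotonicity to cover all $x$; the paper argues by contradiction, supposing a finite limit $\eta$ and deriving $\eta\ge(4/3)\eta$, hence $\eta\le 0$ and $f\equiv 0$. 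Both work; your version has the small bonus of an explicit growth rate. One cosmetic remark: your justification of $f(1)>0$ is slightly roundabout---it follows immediately from $f$ increasing and $f(0)=0$.
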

   \begin{proof}
   Since $f$ is strictly convex, for any $0 < x < y < z$ we have (cf. \cite[Sect.~I.1.1]{lemarechal_convex_analysis})
  \begin{equation}
  \frac{f(y) - f(x)}{y - x} < \frac{f(z) - f(x)}{z - x} < \frac{f(z) - f(y)}{z - y} \, . \label{eq:slopes}
  \end{equation}
  As a result, the function $g_{t}(x) = (f(x + t) - f(t))/x$ is strictly increasing for all $t > 0$ (to see that, let $0 < x < y$ and use the left inequality from \cref{eq:slopes} on the tuple $(t, x + t, y + t)$). Since $f$ is convex and increasing, it is also continuous on $[0, +\infty)$ (cf. \cite[Sect.~I.3.1]{lemarechal_convex_analysis}). By fixing $x$ and taking $t \to 0$ we get that the function $x \to \frac{f(x)}{x}$ is nondecreasing. Suppose that $\lim_{x \to +\infty} \frac{f(x)}{x} = \eta < +\infty$. Then, by the assumption that $f(3x) \geq 2f(2x)$ for all $x \ge 1$, we have
  \[
  \eta = \lim_{x \to +\infty} \frac{f(x)}{x} = \lim_{x \to +\infty} \frac{f(3x)}{3x} \ge \lim_{x \to +\infty} \frac{2 f(2x)}{3/2 \cdot 2x} = 
  \frac{4}{3}\eta \, .
  \]
  Hence $\eta \le 0$ and $f(x) = 0$ for all $x \ge 0$, which contradicts the assumption that $f$ is strictly convex.
  \end{proof}
  
We also need the fact that $f$ is superadditive.
  \begin{lemma}\label{superadditive}
  For all $x, y > 0$ we have $f(x + y) > f(x) + f(y)$.
  \end{lemma}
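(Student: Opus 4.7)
The plan is to exploit the combination of strict convexity and the anchor $f(0) = 0$ through the standard ``convex combination'' trick. I would write each of $x$ and $y$ as a nontrivial convex combination of $0$ and $x+y$: namely
\[
x = \frac{y}{x+y}\cdot 0 + \frac{x}{x+y}\cdot (x+y), \qquad y = \frac{x}{x+y}\cdot 0 + \frac{y}{x+y}\cdot (x+y).
\]
Since $x, y > 0$, both coefficients lie strictly in $(0,1)$, so the combinations are genuinely nontrivial.

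Applying strict convexity of $f$ to each decomposition, and using $f(0) = 0$, gives the strict inequalities
\[
f(x) < \frac{x}{x+y}\, f(x+y), \qquad f(y) < \frac{y}{x+y}\, f(x+y).
\]
Summing the two produces $f(x) + f(y) < f(x+y)$, which is exactly the desired superadditivity.

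The only minor obstacle is making sure the strict version of Jensen's inequality applies: this requires that the combination be nontrivial (both weights in $(0,1)$) and that the two points $0$ and $x+y$ be distinct, which hold since $x, y > 0$. No appeal to the growth hypotheses \cref{eq:move_half} or to continuity is needed; the argument is essentially one line once the decomposition is chosen.
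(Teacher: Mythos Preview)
Your proposal is correct and follows essentially the same approach as the paper's proof: both write $x$ and $y$ as nontrivial convex combinations of $0$ and $x+y$, apply strict convexity together with $f(0)=0$ to obtain $f(x) < \frac{x}{x+y}f(x+y)$ and $f(y) < \frac{y}{x+y}f(x+y)$, and then add. Your write-up is slightly more explicit about why the strict form of the convexity inequality applies, but the argument is the same.
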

  \begin{proof}
  From the strict convexity of $f$ we have $f(x) = f(\frac{x}{x+y}(x+y) + \frac{y}{x+y} \cdot 0) < \frac{x}{x+y}f(x+y)$. Analogously, $f(y) < \frac{y}{x+y}f(x+y)$. Hence $f(x+y) > f(x) + f(y)$. 
  \end{proof}

We now give the proof of \cref{th:poa}.

\begin{proof}[Proof of \cref{th:poa}]
The function $f$ is superadditive by \cref{superadditive}. As a result, the pessimistic payoff to the designer can be trivially bounded by $\globdespayoff(n,c) \le f(n)$. We now want to give a lower bound for the quantity $\min_{\bm{e} \in \mathcal{E}(n,c)} \Ex\hat{U}^{\des}(\bm{e})$. By \cref{fast_growth} we have $\lim_{x \to +\infty} \frac{f(x)}{x} = +\infty$. Let $N \ge 1 + \ninf$ be a natural number such that $\frac{f(x)}{x} > c$ for all $x \ge N - \ninf + 1$. For any $n \ge (\nbyz+1)(N + 1)$ we define $k = \lfloor \frac{n}{N + 1} \rfloor \ge \nbyz+1$. Observe that if we denote $x = \lfloor \frac{n}{k} \rfloor - \ninf + 1$, then we have $x \ge \frac{n}{k} - \ninf \ge N - \ninf + 1$.
 Hence, if the designer chooses a generalized $k$-star, then \cref{thm:ROEchar} shows that all genuine core nodes are protected in any equilibrium. In particular, we have
    $
  \min_{\bm{e} \in \mathcal{E}(n,c)} \Ex\hat{U}^{\des}(\bm{e}) \ge f(n - \nbyz\left\lceil \frac{n}{k} \right\rceil - \ninf + 1) - nc
  $. 
  Moreover, we can estimate
 \begin{equation}
  \begin{aligned}
  \left\lceil \frac{n}{k} \right\rceil \le \frac{n}{k} + 1 &\le \frac{n}{\frac{n}{N+1} - 1} + 1 \\ &\le \frac{n}{\frac{n}{N+1} - \frac{n}{2(N+1)}} + 1 = 2N + 3 \, .
  \end{aligned}
 \end{equation}
 Hence, using \cref{fast_growth}, we get
  \begin{align*} 
  \lim_{n \to +\infty} \PoA(n,c) &\le \lim_{n \to +\infty} \frac{f(n)}{f(n - \nbyz(2N + 3) - \ninf + 1) - nc}  \\ &= \lim_{n \to +\infty} \frac{1}{\frac{f(n - \nbyz(2N + 3) - \ninf + 1)}{f(n)} - \frac{nc}{f(n)}} = 1 \, . \qedhere
  \end{align*}
\end{proof}

\begin{remark}
Notice that the condition of \cref{th:poa} is verified for $f(x) = x^{a}$ with $a \geq 2$. 
Hence, in the case of such functions $f$, the price of anarchy is $1$, so the inefficiencies due to decentralization are fully mitigated by the network design.
This is true, in particular, for Metcalfe's law.
\end{remark}

\section{Extensions of the model}\label{sec:extension}

In the previous section, we have shown that the topology of generalized $k$-star mitigates the costs of decentralization in our model. Nevertheless, our approach can be used to show similar results in a number of modified models. For instance, one could consider a probabilistic model, in which $\nbyz$ byzantine nodes are randomly picked from the set of nodes $\nodes$ (and the distribution of this random variable is known to all players). Then, the designer and nodes optimize their expected utilities, not the pessimistic ones (where the expectation is taken over the possible positions of the byzantine nodes). In this case, we still can give a partial characterization of Nash equilibria on generalized $k$-stars. More precisely, one can show that if the assumptions of \cref{thm:ROEchar} are fulfilled and $f(1) < c < \frac{f(x)}{x}$, then all genuine core nodes are protected. This is exactly what we need in the proof of \cref{th:poa}. Therefore, the price of anarchy in the probabilistic model also converges to $1$ as the size of the network increases. 

\section{Conclusions}\label{sec:concl}
We studied a model of network defense and design in the presence of an intelligent adversary and byzantines nodes that cooperate with the adversary. We characterized optimal defended networks in the case where defense decisions are centralized, assuming that the number of byzantine nodes and the number of attacked nodes are equal to one. We have also shown that, in the case of sufficiently well-behaved functions $f$ (including $f$ in line with Metcalfe's law), careful network design allows to fully mitigate the inefficiencies due to decentralized protection decisions, despite the presence of the byzantine nodes. In terms of network design, we showed that a generalized star is a topology that can be used to achieve this goal. This topology creates incentives for protection by two means. Firstly, it is sufficiently redundant, so that the protected nodes are connected to several other protected nodes. This secures adequate network value even if some of these nodes are malicious. Secondly, it gives sufficient exposure to the nodes, encouraging the nodes that would benefit from protection to choose to protect through fear of being infected (either directly or indirectly). These results could be valuable, in particular, to policy-makers and regulators, showing that such regulations can have strong effect and providing hints for which network structures are better and why.

An interesting avenue for future research is to consider a setup where not only the identities but also the number of byzantine nodes are unknown. How would the optimal networks look like if the protection decisions are centralized? Can we still mitigate the inefficiencies caused by decentralization? Another interesting problem are the optimal networks under centralized protection when the number of byzantine nodes or the budget of the adversary are greater than~$1$. Based on our experiments, we suspect that the topology of these networks is very similar to the case considered here. Nevertheless, a formal result remains elusive.

\bibliographystyle{alpha}
\newcommand{\etalchar}[1]{$^{#1}$}

\newpage

\begin{appendix}
\section{Characterization of equilibria in the centralized defense model}\label{ap:centralized}
 In this section we prove the characterization of equilibira given in \cref{th:centr}. We start with some auxiliary lemmas.
  
    \begin{lemma}\label{slopes}
  For every $t > 0$, the function $\hat{g}_{t} \colon \RR_{> 0} \to \RR_{> 0}$ defined as $\hat{g}_{t}(x) = f(x + t) - f(x)$ is strictly increasing.
  \end{lemma}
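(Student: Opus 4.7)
The plan is to deduce this monotonicity directly from the strict convexity of $f$, in the same spirit as the slope inequalities \cref{eq:slopes} already used in the proof of \cref{fast_growth}. The statement $\hat g_t(x_1) < \hat g_t(x_2)$ for $x_1 < x_2$ says precisely that as one slides a horizontal chord of length $t$ to the right its slope strictly increases, which is a classical reformulation of strict convexity; so the lemma is essentially forced by the standing hypothesis on $f$.

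My preferred implementation goes as follows. Fix $0 < x_1 < x_2$ and set $L = x_2 + t - x_1 > 0$. I will observe that both interior points $x_2$ and $x_1 + t$ can be expressed as convex combinations of the endpoints $x_1$ and $x_2 + t$, with the two coefficient pairs \emph{swapped} between them:
\[
x_2 = \tfrac{t}{L}\,x_1 + \tfrac{x_2 - x_1}{L}\,(x_2 + t), \qquad x_1 + t = \tfrac{x_2 - x_1}{L}\,x_1 + \tfrac{t}{L}\,(x_2 + t).
\]
Two applications of strict convexity of $f$ then produce two strict inequalities whose sum telescopes cleanly to $f(x_2) + f(x_1 + t) < f(x_1) + f(x_2 + t)$, which rearranges to exactly $\hat g_t(x_1) < \hat g_t(x_2)$.

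The only minor point that deserves a sentence of care is the boundary case $x_2 = x_1 + t$, where the two convex combinations collapse to the common midpoint of $[x_1,\,x_2+t]$ and the argument degenerates; there it reduces to ordinary strict midpoint convexity $2 f(x_1 + t) < f(x_1) + f(x_1 + 2t)$, which is again immediate from strict convexity of $f$. Beyond that, I do not anticipate any real obstacle: the specific growth conditions \cref{eq:move_half} are not used at all, only strict convexity on $\RR_{\ge 0}$, and the argument is essentially a two-line calculation once the right convex decompositions are written down.
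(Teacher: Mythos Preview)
Your argument is correct and rests on exactly the same ingredient as the paper's proof: nothing beyond the strict convexity of $f$. The paper packages the computation slightly differently, chaining the two halves of the three-slope inequality \cref{eq:slopes} through the common quotient $\frac{f(y+t)-f(x)}{y+t-x}$ (apply the left inequality to $(x,x+t,y+t)$ and the right inequality to $(x,y,y+t)$), whereas you write out two convex combinations with swapped coefficients and add; these are interchangeable one-line implementations of the same idea, and your side remark about the case $x_2=x_1+t$ is harmless but unnecessary, since the summed inequality still holds strictly there.
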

  \begin{proof}
  Let $0 < x < y$. First use the left inequality from \cref{eq:slopes} on the tuple $(x, x + t, y + t)$ and then use the right inequality on the tuple $(x, y, y + t)$.
  \end{proof}
  
  \begin{corollary}\label{moving_half}
  For all $y \ge 1$ and $x \ge 2y$ we have $f(x + y) \ge f(x) + f(2y)$. For all $y \ge 1$ and $x \ge 2y + 2$ we have $f(x + y) \ge f(x) + f(2y + 1)$.
  
  \end{corollary}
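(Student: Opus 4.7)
The plan is to reduce both inequalities to the two assumptions stated in~\cref{eq:move_half} by exploiting the monotonicity of the slope function $\hat{g}_{y}$ from~\cref{slopes}.

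For the first part, fix $y \ge 1$ and $x \ge 2y$. I would rewrite the target inequality as $\hat{g}_{y}(x) \ge f(2y)$. Since $\hat{g}_{y}$ is strictly increasing by~\cref{slopes} and $x \ge 2y$, we have
\begin{equation*}
\hat{g}_{y}(x) = f(x+y) - f(x) \ge \hat{g}_{y}(2y) = f(3y) - f(2y) \, .
\end{equation*}
The assumption $f(3y) \ge 2 f(2y)$ from~\cref{eq:move_half} (applied with $y \ge 1$) then gives $f(3y) - f(2y) \ge f(2y)$, which combined with the previous display yields $f(x+y) \ge f(x) + f(2y)$.

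For the second part, fix $y \ge 1$ and $x \ge 2y + 2$. Applying~\cref{slopes} again,
\begin{equation*}
\hat{g}_{y}(x) \ge \hat{g}_{y}(2y + 2) = f(3y + 2) - f(2y + 2) \, .
\end{equation*}
The second assumption in~\cref{eq:move_half} states that $f(3y + 2) \ge f(2y + 2) + f(2y + 1)$ for all $y \ge 1$, hence $f(3y+2) - f(2y+2) \ge f(2y+1)$, which produces $f(x+y) \ge f(x) + f(2y+1)$.

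No serious obstacle is expected: the only subtlety is recognizing that $x \ge 2y$ (respectively $x \ge 2y+2$) is precisely the threshold needed for the monotonicity of $\hat{g}_{y}$ to hand back a base case that matches the left-hand side of the corresponding hypothesis in~\cref{eq:move_half}. Everything else is a direct application of~\cref{slopes} and the postulated convexity-type conditions on $f$.
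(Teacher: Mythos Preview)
Your proof is correct and is precisely the expanded version of the paper's one-line proof, which reads ``Both claims follow from \cref{slopes} applied to $\hat{g}_{y}$ and \cref{eq:move_half}.'' You have filled in exactly the intended details: use the monotonicity of $\hat{g}_{y}$ to reduce to the base cases $x = 2y$ and $x = 2y+2$, then invoke the two inequalities of \cref{eq:move_half}.
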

  
  \begin{proof}
    Both claims follow from \cref{slopes} applied to $\hat{g}_{y}$ and \cref{eq:move_half}.   \end{proof}
  
\begin{figure*}[t]
\begin{center}
\begin{minipage}{0.5\textwidth}
  \centering
  \includegraphics[scale=0.5]{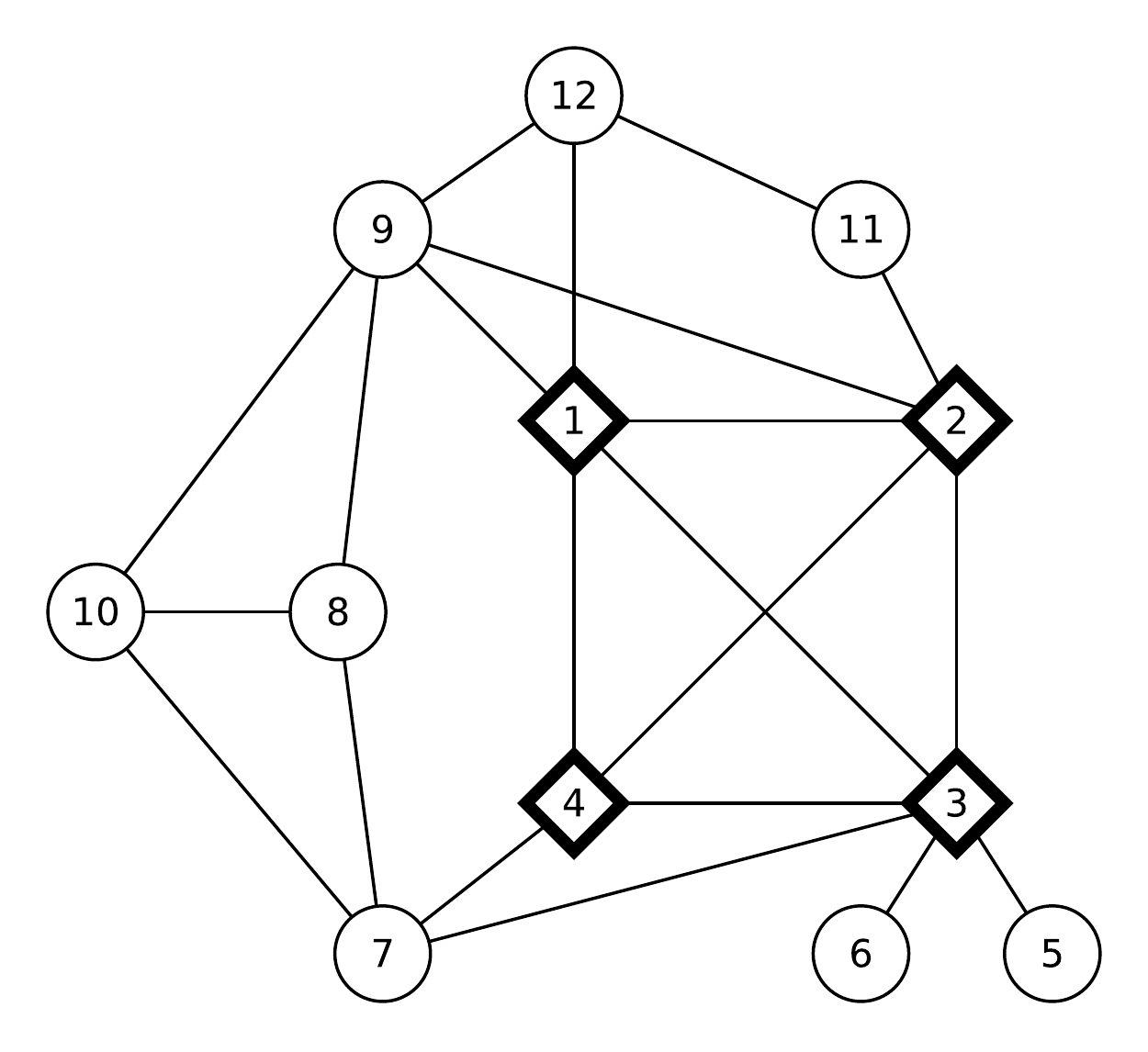}
\end{minipage}\hfill 
\begin{minipage}{0.5\textwidth}
  \centering
  \includegraphics[scale=0.5]{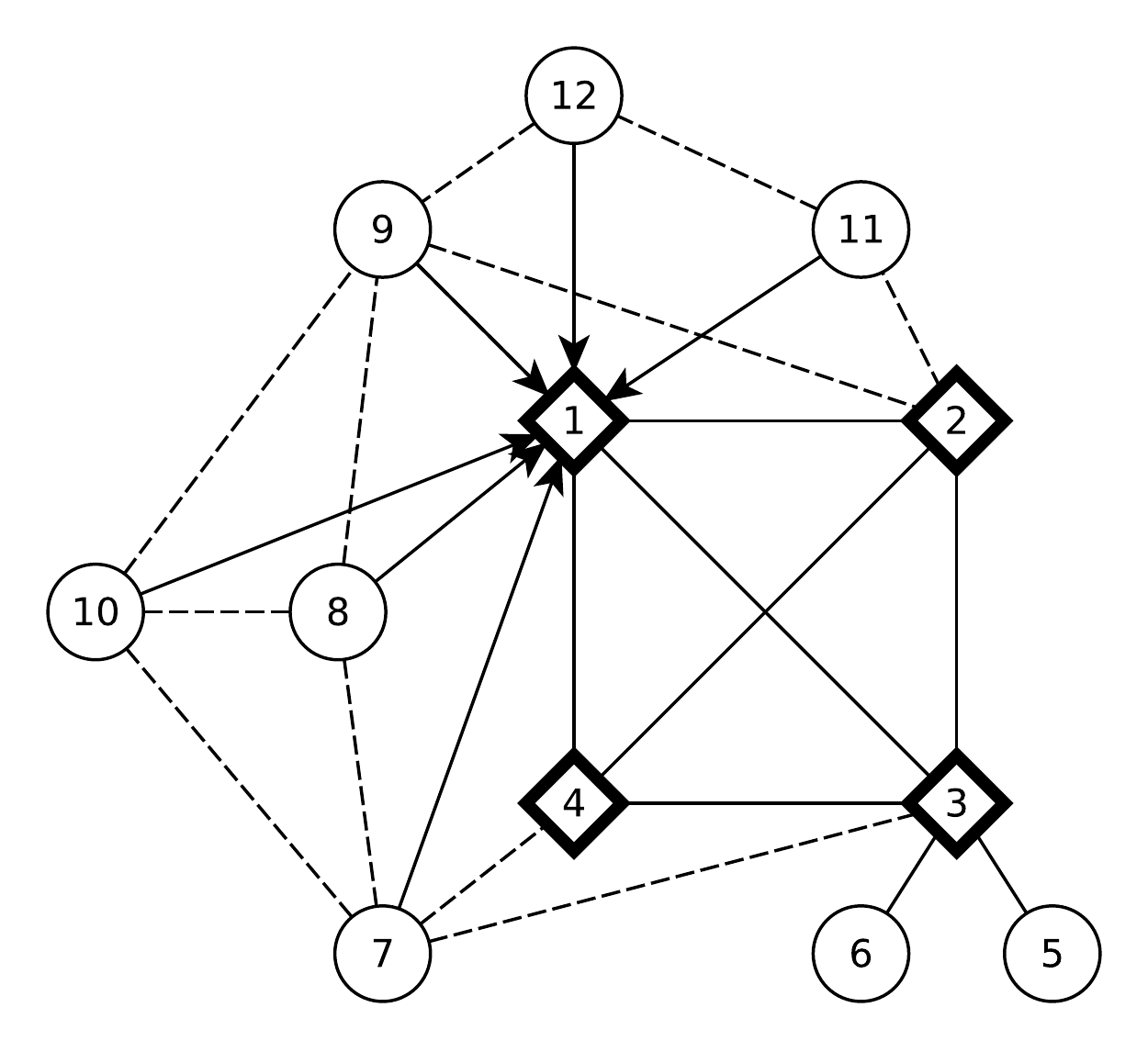}
\end{minipage}\hfill 
\end{center}
\caption{Transforming a network into a generalized star. Protected nodes are depicted in bold. In the left picture we have $V_{1} = \{1, 7,8,9,10,11,12\}$, $V_{3} = \{3,5,6,7,8,9,10,11,12\}$, and $V_{5} = \{5\}$. The node $7$ is reachable from $4$ in $\graph - V_{5}$. In the right picture the nodes $\{7,8,9,10,11,12\}$ have degree $1$ and are connected to node $1$. The node $7$ is still reachable from $4$ in $\tilde{\graph} - \tilde{V}_{5}$.}\label{fig:cloud_creation}
\end{figure*} 
  
  \begin{lemma}\label{lemma:star_structure}
  Suppose that $(\graph, \defense)$ is an equilibrium network and that $\card{\defense} = k \ge 2$. Then, there is a network $(\graph', \defense')$ such that $(\graph', \defense')$ is also an equilibrium network, $\card{\defense'} = k$, all nodes from $\defense'$ belong to the same connected component, this component is a generalized $k$-star, and $\defense'$ is the core of this star. Furthermore, the component of $\graph$ that contains $\defense$ is strictly larger than other components of $\graph$. 
  \end{lemma}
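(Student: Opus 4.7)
The plan is to transform $(G,\Delta)$ into the claimed $(G',\Delta')$ by a sequence of local surgeries, each of which preserves $|\Delta|=k$ and does not decrease the designer's pessimistic payoff; since the latter is maximal in equilibrium, every intermediate network is itself an equilibrium network. The surgeries enforce, in turn, that (i) all protected nodes lie in a single component $C_1$, (ii) the induced subgraph on $\Delta$ is a clique, (iii) every non-protected node of $C_1$ is a pendant of $\Delta$, and (iv) the periphery is distributed as evenly as possible across the core -- the defining features of a generalized $k$-star.

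For Stage (i), if $\Delta$ is spread over several components, pick two containing protected nodes and add an edge between one protected vertex in each. Such an edge is invisible in any attack graph $A(G,\Delta,B)$ unless the (single) byzantine node is one of its endpoints, in which case the destroyed cluster around it is unchanged; hence the residual network only gains connectivity, and by superadditivity of $f$ (\cref{superadditive}) the pessimistic payoff is nondecreasing. Stage (ii) is analogous: adding all missing edges inside $\Delta$ produces no new edges in any attack graph but increases the value of the surviving core component in profiles where the byzantine node is outside $\Delta$. In Stage (iii), for each $v\in C_1\setminus\Delta$ we replace its edge set by a single edge to the core neighbor lying on a shortest $v$-to-$\Delta$ path (this is precisely the transformation in \cref{fig:cloud_creation}); a short case analysis on the byzantine identity shows that $v$ survives in the new network whenever it survived in $G$.

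Stage (iv) balances the periphery: while two core nodes $u_1,u_2$ carry pendant counts $p_1\ge p_2+2$, move one pendant from $u_1$ to $u_2$. After such a swap the worst-case cluster destroyable by attacking a byzantine core node has size at most $\lceil n/k\rceil$, and the inequalities \cref{eq:move_half} together with \cref{moving_half} guarantee that the residual $f$-value is not hurt in any worst-case type profile. Iterating yields the even $\lfloor n/k\rfloor$--$\lceil n/k\rceil$ distribution defining the generalized $k$-star. For the size claim, if in the resulting network a component $C\ne C_1$ satisfied $|C|\ge|C_1|$, then redirecting the vertices of $C$ into $C_1$ as additional pendants (an instance of Stage (i) followed by Stage (iii)) would strictly increase the pessimistic payoff by strict convexity of $f$, contradicting equilibrium.

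The main obstacle is Stage (iv). Verifying that a single pendant-swap never decreases the worst-case residual requires separating the regime in which the adversary prefers to destroy the largest byzantine-induced periphery cluster from the regime in which he prefers to destroy an unprotected secondary component, since the two regimes respond differently to a local rebalancing; it is precisely to control this dichotomy that the technical convexity inequalities \cref{eq:move_half} have been imposed on $f$.
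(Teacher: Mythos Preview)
Your four-stage surgery is exactly the paper's argument: complete the clique on $\Delta$, turn the unprotected part of $C_1$ into pendants (the operation of \cref{fig:cloud_creation}), then even out the pendant counts. So the approach matches.

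Two points deserve correction, however. First, you have misdiagnosed Stage~(iv) as the delicate step and invoked \cref{eq:move_half}/\cref{moving_half} to handle it. In fact the pendant-swap is elementary and needs no convexity hypothesis: moving one pendant from a larger cluster $V_{u_1}$ to a smaller $V_{u_2}$ cannot increase $\max_i |V_i|$, and it leaves every other component untouched; hence for any byzantine position the adversary's optimal residual value is unchanged (if he attacks outside $C_1$) or weakly larger (if he attacks the byzantine core cluster). The technical inequalities \cref{eq:move_half} are not used in this lemma at all; they enter only in the proof of \cref{th:centr}, where the designer merges whole unprotected components into the star. The regime dichotomy you describe is the obstacle there, not here.

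Second, for the final size claim the paper gives a one-line argument that is both simpler and sharper than yours: if some unprotected component $C$ has $|C|\ge |C_1|$, then the adversary strictly prefers to destroy $C$ (attacking inside $C_1$ loses at most $f(|C_1|)-f(|C_1|-\max_i|V_i|)<f(|C_1|)\le f(|C|)$), so the residual network is the same whether or not $\Delta$ is protected; dropping the $k$ defenses then strictly raises the designer's payoff by $kc>0$, contradicting equilibrium. Your alternative---absorb $C$ into $C_1$ and appeal to strict convexity---is the nontrivial computation the paper carries out \emph{after} this lemma, and ``by strict convexity of $f$'' does not suffice on its own.
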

  \begin{proof}
  We will show how to transform $(\graph, \defense)$ into $(\graph', \defense')$ without diminishing the pessimistic payoff to the designer. First, if two nodes $i,j \in \defense$ are protected, then we add an edge between them. This does not decrease the designer's payoff, because there is only one byzantine node; hence, any attack infects at most one of the nodes $i,j$ and the residual network after the attack is not smaller that before the addition of the edge. Therefore, we can suppose that the subnetwork $\graph[\defense]$ is a clique. We focus on the connected component $\comp$ of $\graph$ that contains this clique. We will show that the remaining nodes of $\comp$ can be distributed in such a way that they form a periphery of a generalized $k$-star. 
  
  Let $G = (V, E)$. For any $i \in V$, let $V_{i} \subset V$ denote the set of nodes that get infected if $i$ is byzantine and gets infected. In other words, $V_{i}$ contains $i$ and all unprotected nodes $j \in V \setminus \defense$ such that there is a path from $i$ to $j$ that passes only through unprotected nodes. We refer to \cref{fig:cloud_creation} for an example. Observe that any optimal attack of the adversary that infects a node from $\graph$ infects in fact a set of nodes $V_{j}$. Indeed, if this attack infects the byzantine node $\byznode$, then the set of infected nodes is equal to $V_{\byznode}$. If, instead, this attack infects an unprotected genuine node $i$, then the set of infected nodes is equal to $V_{i}$. 
  
  We do the following operation. We fix $i \in \defense$, we take all unprotected nodes that belong to $V_{i}$, we delete all of their outgoing edges and, for every such node $j$, add the edge $ij$. An example of this operation is depicted in \cref{fig:cloud_creation}. We will show that this operation does not decrease the pessimistic payoff to the designer. Denote the new network by $\tilde{\graph} = (V, \tilde{E})$, and the corresponding sets by $\tilde{V}_{\ell}$ for $\ell \in V$. By the discussion in the preceding paragraph, it is enough to prove that for every $\ell \in V$, the connected components of the network $\graph - V_{\ell}$ do not get smaller after our operation. Suppose that $j_{0}j_{1} \in E$ is an edge in $\graph - V_{\ell}$ for some $\ell \in V$. We will prove that the node $j_{1}$ is still reachable from $j_{0}$ in the network $\tilde{\graph} - \tilde{V}_{\ell}$. First, we need to prove that $j_{0}, j_{1}$ do not belong $\tilde{V}_{\ell}$. Indeed, if $\ell = i$, then the claim is obvious because $\tilde{V}_{i} = V_{i}$. Otherwise, a path from $\ell$ to $j_{p}$ (for $p \in \{0,1\}$) that goes through unprotected nodes in $\tilde{\graph}$ cannot contain a node from $\tilde{V}_{i}$, because unprotected nodes in $\tilde{V}_{i}$ have degree $1$ and are connected to a protected node $i$. Thus, any such path does not contain a node from $V_{i}$, and hence it is also a path in $\graph$. Therefore $j_{0}, j_{1} \notin \tilde{V}_{\ell}$. We can now prove that $j_{1}$ is reachable from $j_{0}$ in $\tilde{\graph} - \tilde{V}_{\ell}$. If $j_{0}, j_{1} \notin V_{i}$, then $j_{0}j_{1}$ is an edge in $\tilde{E}$ and the claim is true. Otherwise, we have two possibilities. If both nodes $j_{0}, j_{1}$ belong to $V_{i}$, then $j_{0}ij_{1}$ is a path in $\tilde{\graph}$. If only one of them belongs to $V_{i}$, then the second one must belong to $\defense$, and hence $j_{0}ij_{1}$ is still a path in $\tilde{\graph}$ (because protected nodes form a clique in $\tilde{\graph}$). Moreover, the node $i$ does not belong to $V_{\ell}$ because $\ell \neq i$. Therefore, the path $j_{0}ij_{1}$ belongs to $\tilde{\graph} - \tilde{V}_{\ell}$. We can repeat this reasoning for every edge in $\graph - V_{\ell}$. As a consequence, if two nodes $j,j' \in V$ are connected by a path in $\graph - V_{\ell}$, then they are still connected by a path in $\tilde{\graph} -\tilde{V}_{\ell}$. Therefore, our operation does not decrease the pessimistic payoff of the designer.
  
  We can repeat the operation presented above for every protected node $i \in \defense$. As a result, we get a network $(\graph, \defense)$ such $\graph[\defense]$ is a clique and every unprotected node that belongs to the component $\comp$ containing this clique has degree $1$. It remains to prove that these nodes can be distributed evenly among the core protected nodes. Suppose that there are two protected nodes $i, j \in \defense$ such that $\abs{V_{i}} \ge \abs{V_{j}} + 2$ (where the sets $V_{\ell}$ are defined as previously). We take an unprotected node $\ell \in V_{i}$, delete the edge $i \ell$ and add the edge $j \ell$. This operation does not decrease the pessimistic payoff to the designer. Indeed, if the adversary infects a node in a component different than $\comp$, then the payoff to the designer does not change. Otherwise, the pessimistic utility to the designer is achieved when the adversary infects a byzantine node $i^{*} \in \defense$ such that the set $V_{i^{*}}$ has maximal cardinality. Hence, this payoff does not decrease after our operation.
  
 Finally, if the component of $\graph$ that contains $\defense$ is smaller than or equal to a component that does not contain any protected node, then it is more profitable to the adversary to infect this unprotected component. Hence, the designer can strictly improve his payoff by not using any protection at all, $\defense = \varnothing$, which gives a contradiction with our assumptions.
    \end{proof}
    
\begin{figure*}[t]
\begin{center}
\begin{minipage}{0.5\textwidth}
  \centering
  \includegraphics[scale=0.5]{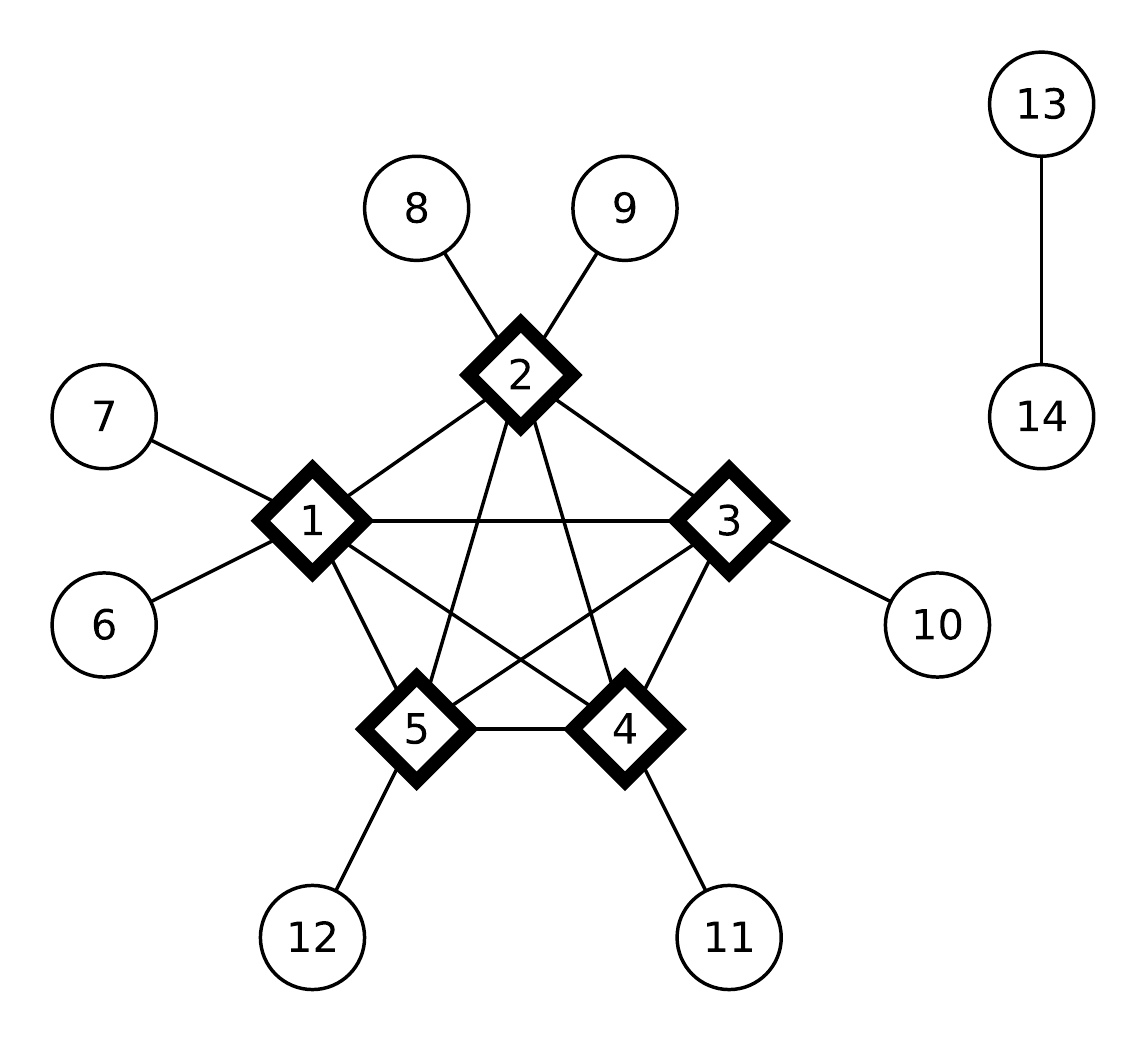}
\end{minipage}\hfill 
\begin{minipage}{0.5\textwidth}
  \centering
  \includegraphics[scale=0.5]{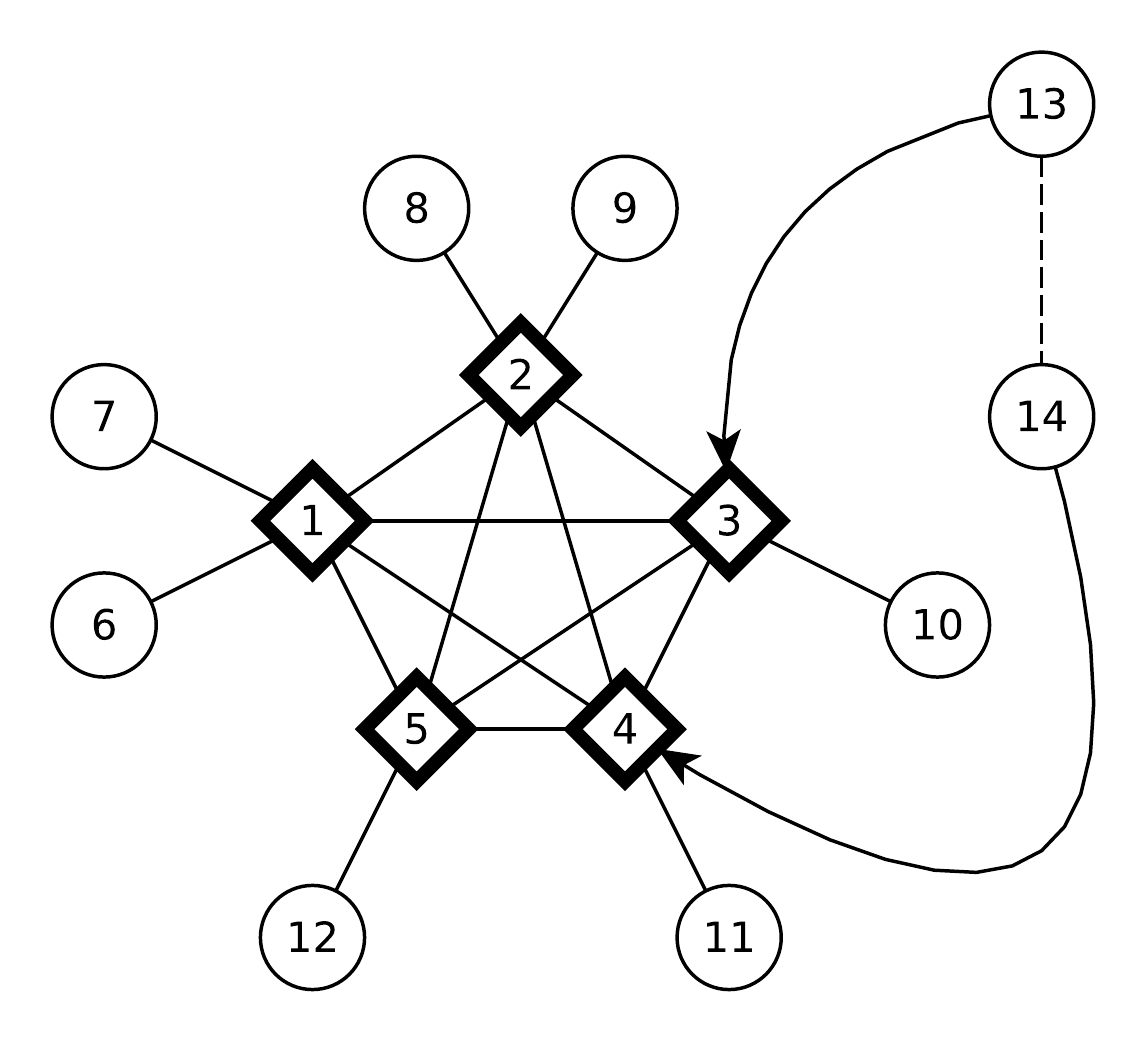}
\end{minipage}\hfill 
\end{center}
\caption{Spreading the nodes from an unprotected component to a core-protected generalized star.}\label{fig:cloud_spreading}
\end{figure*} 
  
  \begin{proof}[Proof of \cref{th:centr}]
    Let $(\graph, \defense)$ be an equilibrium network. Let $\defcomp, \comp_{2}, \dots, \comp_{m}$ denote its connected components, $s_{i} = \abs{\comp_{i}}$ for all $i \ge 1$, and assume that $s_{1} \ge s_{2} \ge \dots \ge s_{m}$. We will show how to transform $(\graph, \defense)$ into an equilibrium network with the topology described in the claim. Let $k = \abs{\defense}$ and observe that $k \neq 1$. Indeed, if $\abs{\defense} = 1$, then the pessimistic payoff to the designer is strictly larger if he stops protecting the one protected node (because, in the worst case scenario, this node is byzantine). First, suppose that $k \ge 3$. By \cref{lemma:star_structure}, we can assume that $\defcomp$ is a generalized $k$-star with protected core and that $s_{1} > s_{2}$. We want to find an equilibrium in which $s_{2} \in \{0,1\}$. If $s_{2} \ge 2$, then we consider the following transformation of the network $(\graph, \defense)$: we take the unprotected component $\comp_{2}$ and move all of its nodes to $\defcomp$, spreading them in a regular fashion as depicted in \cref{fig:cloud_spreading}. More formally, we consider a network $(\trgraph, \defense)$ such that $\trgraph$ consists of a connected component $\trdefcomp$ having a topology of a generalized $k$-star of size $s_{1} + s_{2}$ with protected core and unprotected components $\comp_{3}, \comp_{4}, \dots, \comp_{m}$. We will show that the network $(\trgraph, \defense)$ gives a payoff to the designer that is no smaller than the payoff obtained from choosing $(\graph, \defense)$. First, observe that if the adversary infects the component $\comp_{2}$ in $(\graph, \defense)$, then the pessimistic payoff to the designer is equal to $\Payoff_{1} = f(s_{1}) + f(s_{3}) + f(s_{4}) +  \dots + f(s_{m})$. Moreover, if the adversary infects the component $\defcomp$ in $(\graph, \defense)$, then the pessimistic payoff to the designer is equal to
    \begin{equation*}
    \Payoff_{2} = f\left(s_{1} - \left\lceil \frac{s_{1}}{k}\right\rceil \right) + f(s_{2}) + f(s_{3}) + \dots + f(s_{m}) \, .
    \end{equation*}
    Hence, the pessimistic payoff to the designer from playing $(\graph, \defense)$ is equal to $\min\{P_{1}, P_{2}\}$. On the other hand, his payoff from playing $(\trgraph, \defense)$ is equal to
     $\min \{T_{1}, T_{2}\}$, where $T_{1} =f(s_{1} + s_{2}) + f(s_{4}) + \dots + f(s_{m})$
     and 
     \begin{equation*}
     T_{2} = f\left(s_{1} + s_{2} - \left\lceil \frac{s_{1} + s_{2}}{k}\right\rceil \right) + f(s_{3}) + \dots + f(s_{m}) \, . 
     \end{equation*}
    Therefore, it is enough to show that $\min\{T_{1}, T_{2}\} \ge \min\{P_{1}, P_{2}\}$. By \cref{superadditive} we get $f(s_{1} + s_{2}) \ge f(s_{1}) + f(s_{2}) \ge f(s_{1}) + f(s_{3})$.
     This shows that $T_{1} \ge P_{1}$. To prove that $T_{2} \ge \min\{P_{1}, P_{2}\}$ we consider multiple cases, depending on the relative sizes of $\defcomp$ and $\comp_{2}$.
    
     \begin{asparaenum}[{Case} I:]
     \item Suppose that $s_{2} \ge 2\left\lceil\frac{s_{1}}{k}\right\rceil$. We then have $T_{2} \ge P_{1}$ by the inequality
      \begin{equation}\label{eq:some_technical}
      \begin{aligned}
      f\Bigl(s_{1} + s_{2} &- \left\lceil \frac{s_{1} + s_{2}}{k}\right\rceil \Bigr) \ge 
       f\left(s_{1} + s_{2} - \left\lceil \frac{s_{1}}{k}\right\rceil - \left\lceil \frac{s_{2}}{k}\right\rceil \right)\\ &\ge 
      f\left(s_{1} - \left\lceil \frac{s_{1}}{k}\right\rceil + \frac{s_{2}}{2} \right) \ge f(s_{1}) \, .
    \end{aligned}
    \end{equation}
    \item Suppose that $2 \le s_{2} \le s_{1} - \left\lceil \frac{s_{1}}{k}\right\rceil$. In this case, by \cref{moving_half},
    \begin{equation}\label{eq:some_technical2}
    f\left(s_{1} - \left\lceil \frac{s_{1}}{k}\right\rceil + \frac{s_{2}}{2} \right) \ge f\left(s_{1} - \left\lceil\frac{s_{1}}{k}\right\rceil\right) + f(s_{2}) \, .
    \end{equation}
    Thus, we have $T_{2} \ge P_{2}$ by combining \cref{eq:some_technical2} and the first two inequalities of \cref{eq:some_technical}.
     \item Suppose that $s_{1} - \left\lceil \frac{s_{1}}{k}\right\rceil < s_{2}  < 2\left\lceil\frac{s_{1}}{k}\right\rceil$ and $s_{2} \ge 2$. Let $s_{1} = kl + r$, where $0 \le r < k$ and $l \ge 1$. If $r = 0$, then we have $kl - l < 2l$, which is impossible for $k \ge 3$. Hence $r \ge 1$ and we have $kl + r - l - 1 < s_{2} < 2l + 2$. Note that the open interval $(kl + r - l - 1, 2l + 2)$ contains an integer number if and only if $(2l + 2) - (kl + r - l - 1) \ge 2 \iff 3l + 1 \ge kl + r$. This condition is satisfied only for $k = 3$ and $r = 1$. Hence, we have $s_{1} = 3l + 1$ and $s_{2} = 2l + 1$ for some $l \ge 1$. We want to prove that $T_{2} \ge P_{1}$ or, equivalently,
     \[
      f\left(5l + 2 - \left\lceil\frac{5l + 2}{3}\right\rceil \right) \ge  f(3l + 1) \, .
     \]
     If $l = 1$, then this inequality takes form $f(4) \ge f(4)$. If $l \ge 2$, then we have $5l + 2 \le 6l$ and hence $f\left(5l + 2 - \left\lceil\frac{5l + 2}{3}\right\rceil \right) \ge f(3l + 2) \ge  f(3l + 1)$.
     \end{asparaenum}
     Therefore, there is an equilibrium network such that $s_{2} \in \{0, 1\}$. If $s_{2} = 1$ and $m \ge 3$, then \cref{superadditive,moving_half} give $f(s_{1} - 1) + 2f(1) < f(s_{1} - 1) + f(2) \le f(s_{1})$. Therefore, it is more profitable for the adversary to infect one node from $\defcomp$ than to infect a component composed of two nodes. Thus, it would be strictly profitable to the designer to merge $\comp_{2}$ and $\comp_{3}$, which gives a contradiction. Hence, we have $s_{2} = 0$ or $s_{2} = 1$ and $m = 2$. It is easy to see that the first case is more profitable to the designer if $n \bmod k \neq 1$ while the second case is more profitable if $n \bmod k = 1$.
     
     The proofs for the cases $k = 0$ and $k = 2$ are less involved than the one above, so we just sketch them. For $k = 0$, the pessimistic payoff to the designer is equal to $\Payoff = f(s_{2}) + \ldots + f(s_{m})$.
     We do the following transformations on the network: if $s_{i} = 2l$ for some $i \ge 3$ and $l \ge 1$, then we spread half of $\comp_{i}$ into $\defcomp$ and the other half into $\comp_{2}$. By \cref{moving_half} we have $f(s_{2} + l) \ge f(s_{2}) + f(s_{i})$, and hence this change is profitable to the designer. If $s_{i}$ is odd for all $i \ge 3$ and we have $m \ge 4$, then we take all the nodes belonging to the union of $\comp_{3}$ and $\comp_{4}$ and spread half of them into $\defcomp$ and the other half into $\comp_{2}$. This improves the designer's payoff by the inequality $f((s_{2} + \frac{1}{2}s_{3}) + \frac{1}{2}s_{4}) \ge f(s_{2} + \frac{1}{2}s_{3}) + f(s_{4}) \ge f(s_{2}) + f(s_{3}) + f(s_{4})$. Finally, if $m = 3$ and $s_{3} = 2l +1$ is odd, greater than $1$, and strictly smaller than $s_{2}$, then we spread $l$ nodes from $\comp_{3}$ to $\defcomp$ and $l + 1$ nodes to $\comp_{2}$. By \cref{moving_half} we have $\min\{f(s_{1} + l), f(s_{2} + l + 1)\} \ge f(s_{2} + l) \ge f(s_{2}) + f(s_{3})$ and this change is profitable to the designer.
     
     For $k = 2$, we can suppose (as in the case $k = 3$), that $\defcomp$ is a generalized $2$-star with protected core and that $s_{1} > s_{2}$. The pessimistic payoff to the designer is equal to $\min\{\Payoff_{1}, \Payoff_{2}\}$, where
     \begin{align*}
     \Payoff_{1} = f(s_{1}) + f(s_{3}) + f(s_{4}) +  \dots + f(s_{m}) \, , \\
     \Payoff_{2} = f\left(s_{1} - \left\lceil \frac{s_{1}}{2}\right\rceil \right) + f(s_{2}) + f(s_{3}) + \dots + f(s_{m}) \, .
     \end{align*}
     We do the following transformation on the network: if $s_{3}  = 2l$, then we spread half of its nodes to $\comp_{2}$ and the other half to $\defcomp$ (so that $\defcomp$ becomes a generalized $2$-star with $s_{1} + l$ nodes). By \cref{moving_half} we have $f(s_{1} + l) \ge f(s_{1}) + f(s_{3})$ and $f(s_{2} + l) \ge f(s_{2}) + f(s_{3})$. Moreover, we have
     \begin{equation}
    \begin{aligned} \label{13}
    f\left(s_{1} + l - \left\lceil \frac{s_{1} + l}{2}\right\rceil \right) &\ge 
     f\left(s_{1} - \left\lceil \frac{s_{1}}{2}\right\rceil + l - \left\lceil \frac{l}{2}\right\rceil \right) \\ &\ge 
    f\left(s_{1} - \left\lceil \frac{s_{1}}{2}\right\rceil \right) \, .
    \end{aligned}
    \end{equation}
    
  Therefore, this change is profitable to the designer. If $s_{3} = 2l + 1$ is odd and greater that $1$, then we do the following transformation: we spread $l$ nodes to $\defcomp$ (so that $\defcomp$ becomes a generalized $2$-star with $s_{1} + l$ nodes) and $l + 1$ nodes to $\comp_{2}$. \Cref{13} still holds. Moreover, since $s_{1} > s_{2} \ge s_{3}$, \cref{moving_half} shows that $f(s_{2} + l + 1) \ge f(s_{2} + 1) + f(s_{3}) \ge f(s_{2}) + f(s_{3})$ and $f(s_{1} + l) \ge f(s_{1}) + f(s_{3})$. As before, this change is profitable to the designer. Finally, if $s_{3} = 1$ and $m \ge 4$, then we have two cases. If $s_{1} \ge 3$ then, by the same reasoning as in the case $k = 3$, merging $\comp_{3}$ and $\comp_{4}$ is profitable to the designer. Otherwise, we have $s_{1} = 2, s_{2} = 1$ and $s_{i} = 1$ for all $i \ge 3$. In this case, we have $f(s_{1}) = f(2) > 2f(1) = f(s_{1} - 1) + f(1)$. Therefore, the optimal attack of the adversary attacks the protected node. It is thus profitable to the designer to split the nodes forming $\defcomp$ and do not use the protection.
  
  To finish the proof, we observe that the quantities $\despayoff_{k}(n)$ correspond to the pessimistic payoffs of the designer achieved from choosing an equilibrium network with $k$ protected nodes and the topology described in the claim.
  \end{proof}
\end{appendix}

\end{document}